\newtheorem{theorem}{Theorem}[section]
\newtheorem{lemma}[theorem]{Lemma}
\newtheorem{problem}[theorem]{Problem}
\newtheorem{corollary}[theorem]{Corollary}
\newtheorem{definition}[theorem]{Definition}
\newtheorem{remark}[theorem]{Remark}
\numberwithin{equation}{section}
\newcommand{\reals}{{\mathbb{R}}}
\newcommand{\nats}{{\mathbb{N}}}
\newcommand{\identity}{{\mathbb{I}}}
\newtcolorbox{resp}[1][]{%
	colback=gray!5!white,%
	colframe=gray!80!black,%
	size=small,%
	boxrule=1pt,%
	halign title=flush center,%
	coltitle=black,%
	#1%
}
\newenvironment{nouppercase}{%
	\renewcommand{\uppercasenonmath}[1]{}}{}
\begin{document}

\begin{abstract}
This work is concerned with synthesizing safety controllers for discrete-time nonlinear systems beyond polynomials with \emph{unknown} mathematical models using the notion of $k$-inductive control barrier certificates ($k$-CBCs). Conventional CBC conditions (with $k=1$) for ensuring safety over dynamical systems are often restrictive, as they require the CBCs to be non-increasing at every time step. Inspired by the success of $k$-induction in software verification, $k$-CBCs relax this requirement by allowing the barrier function to be non-increasing over $k$ steps, while permitting $k-1$ (one-step) increases, each up to a threshold $\epsilon$. This relaxation enhances the likelihood of finding feasible $k$-CBCs while providing safety guarantees across the dynamical systems. Despite showing promise, existing approaches for constructing $k$-CBCs often rely on \emph{precise mathematical knowledge} of system dynamics, which is frequently unavailable in practical scenarios. In this work, we address the case where the underlying dynamics are unknown—a common occurrence in real-world applications—and employ the concept of \emph{persistency of excitation}, grounded in Willems \textit{et al.}'s fundamental lemma. This result implies that input-output data from a single trajectory can capture the behavior of an unknown system, provided the collected data fulfills a specific rank condition. We employ sum-of-squares (SOS) programming to synthesize the $k$-CBC as well as  the safety controller directly from data while ensuring the safe behavior of the unknown system. The efficacy of our approach is demonstrated through a set of physical benchmarks with unknown dynamics, including a DC motor, an RLC circuit, a nonlinear \emph{nonpolynomial} car, and a nonlinear polynomial Lorenz attractor.\\

{\bf Keywords:} $k$-inductive control barrier certificates, data-driven control, single-trajectory approach, discrete-time nonlinear systems, safety certificates,  formal methods
\end{abstract}

\title{{\LARGE Learning $k$-Inductive Control Barrier Certificates for Unknown  \vspace{0.2cm}\\ Nonlinear Dynamics Beyond Polynomials\vspace{0.4cm}}}

\author{{\bf {\large Ben Wooding \and Abolfazl Lavaei\vspace{0.4cm}}}\\{\normalfont School of Computing, Newcastle University, United Kingdom}\\ \texttt{\{ben.wooding,abolfazl.lavaei\}@newcastle.ac.uk}}

\pagestyle{fancy}
\lhead{}
\rhead{}
  \fancyhead[OL]{Ben Wooding and Abolfazl Lavaei}

  \fancyhead[EL]{Learning $k$-Inductive Control Barrier Certificates for Unknown Nonlinear Dynamics Beyond Polynomials}
  \rhead{\thepage}
 \cfoot{}
 
\begin{nouppercase}
	\maketitle
\end{nouppercase}

\section{Introduction}
\label{sec:introduction}

Over the past two decades, formal methods have gained prominence in \emph{verifying and synthesizing controllers} for dynamical systems to ensure safety properties, preventing the system from evolving into unsafe regions. These approaches are particularly crucial in \emph{safety-critical} applications~\cite{mcgregor2017analysis}, where undesired behavior can lead to severe consequences, including loss of life, injury, or significant financial losses. While \emph{formal verification} focuses on rigorously checking whether a system meets the desired specifications, the \emph{synthesis problem} involves designing a controller—typically a state-feedback architecture—for dynamical models with control inputs to enforce the property of interest.

Control barrier certificates (CBCs)~\cite{prajna2004safety,ames2019control,xiao2023safe,wooding2024protect} offer a \emph{discretization-free} method to provide formal guarantees on system behavior. Similar to Lyapunov functions, CBCs are defined over the system's state space and satisfy specific inequalities related to both the function itself and the system dynamics. An appropriate level set of a CBC can separate unsafe regions from system trajectories starting from a set of initial conditions. The existence of such a function provides formal certification, along with a controller that enforces system safety. While primarily designed to ensure safety, these methods can also be applied to other properties like reachability~\cite{prajna2007convex}. A graphical illustration of a CBC is provided in Fig.~\ref{fig:BC-demo}.

\begin{figure}[t]
	\centering
\includegraphics[width=0.43\columnwidth]{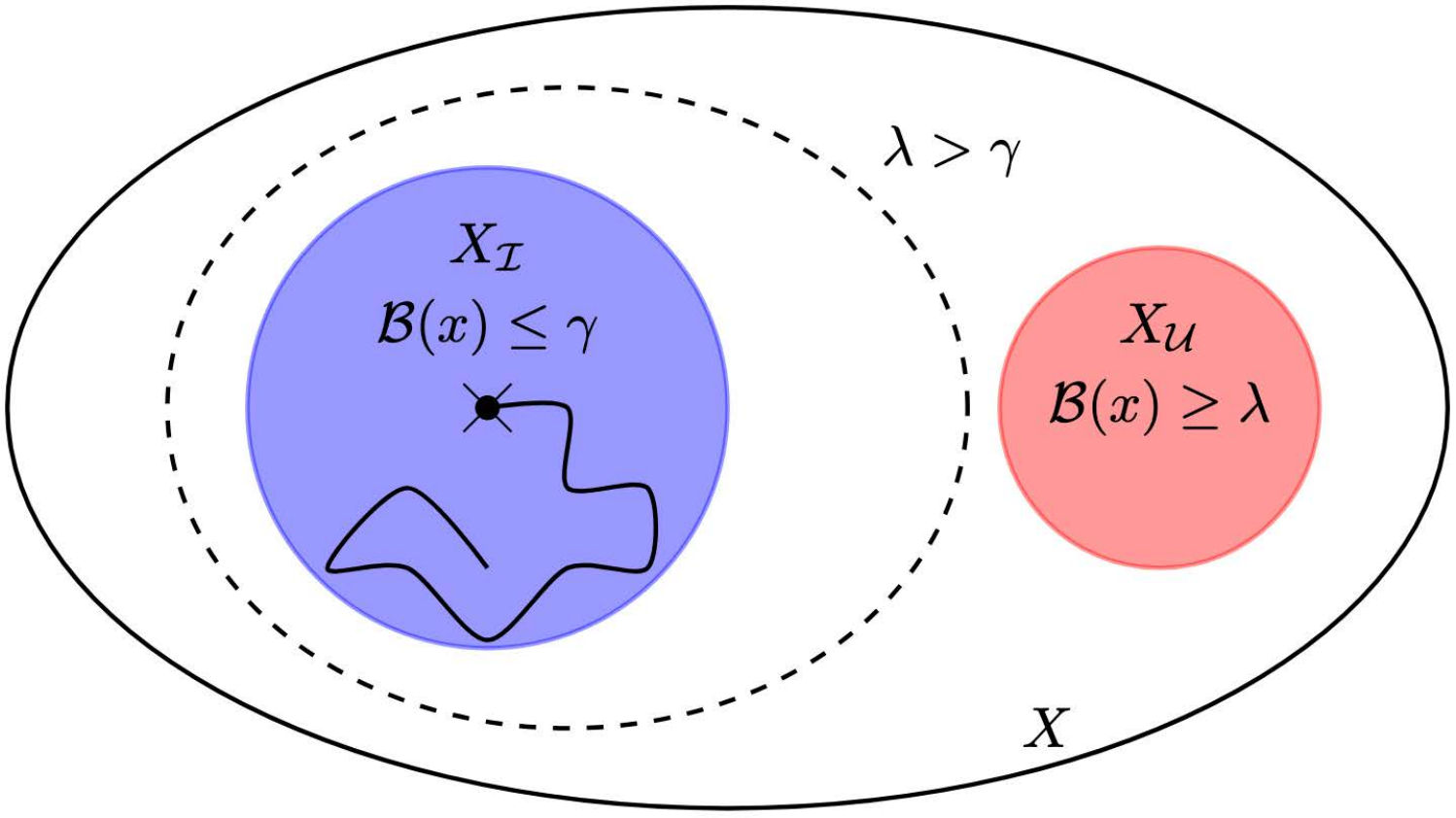}
	\caption{A CBC $\mathcal{B}(x)$ for a dynamical system, with the dashed line representing the initial level set $\mathcal{B}(x) = \gamma$. Purple and red circles are initial and unsafe regions, respectively.}
	\label{fig:BC-demo}
\end{figure}

\noindent\textbf{$k$-Inductive CBCs.} While CBCs show great promise for offering safety certificates, they are restrictive in the sense that they require the CBC to be non-increasing at every time step to certify safety. Since CBC approaches are only \emph{sufficient}, if their inherent conservatism leads to no CBC being found, no conclusions can be drawn about the system safety. To address this non-increasing limitation, $k$-inductive control barrier certificates ($k$-CBCs) have been introduced in the relevant literature~\cite{Anand2021kSafety,anand2022k}, which relax the strict non-increasing condition by allowing $k-1$ one-step increases, each up to a threshold $\epsilon$. This relaxation enhances the likelihood of finding feasible $k$-CBCs and potentially providing safety guarantees for systems where standard CBCs may fail.

While $k$-CBCs offer a more flexible approach to providing safety guarantees compared to conventional CBCs, they rely on \emph{precise knowledge} of the system's mathematical model. However, in real-world scenarios, such models are often unavailable or too complex to be constructed accurately. 
This highlights the need for alternative \emph{data-driven approaches} that use only \emph{input-output data} to ensure safety, bypassing the requirement for an explicit model while offering formal correctness guarantees.

\noindent\textbf{Data-Driven Techniques.}  Two promising data-driven approaches in the literature provide formal analysis of unknown systems using collected data. The first is the \emph{scenario approach}~\cite{calafiore2006scenario,campi2009scenario}, which initially solves the problem using data and then translates the results back to unknown models through intermediate steps involving \emph{chance constraints}~\cite{esfahani2014performance}. While scenario approaches show great potential for offering formal guarantees over unknown systems, they require the data to be \emph{independent and identically distributed (i.i.d.)}. This implies that only one input-output data pair is allowed to be obtained from each trajectory~\cite{calafiore2006scenario}, necessitating the collection of \emph{multiple independent trajectories} to achieve a desired confidence level, based on a known closed-form relationship between them. As a result, the scenario approach is especially suited for system simulators, where generating numerous independent trajectories is feasible.

An alternative data-driven technique that complements the scenario approach is the \emph{non-i.i.d. trajectory-based approach}. Unlike the scenario approach, this method only requires a \emph{single input-output trajectory} from the unknown system, collected over a \emph{specific time horizon}~\cite{de2019formulas}. It leverages the concept of \emph{persistent excitation}, implying that the trajectory must satisfy a rank condition for certain classes of systems to sufficiently excite the system's dynamics, as introduced by Willems \textit{et al.}'s fundamental lemma~\cite{willems2005note}. When a trajectory is \emph{persistently excited}, it contains enough information about the system's behavior to enable analysis of the unknown system. This approach is particularly beneficial since it reduces the need for multiple independent trajectories, making it practical for situations where collecting several distinct trajectories is not feasible.

\noindent\textbf{Key Contributions.} The central contribution of this work is, for the first time, the construction of $k$-inductive control barrier certificates using only a single input-state trajectory from \emph{unknown} discrete-time nonlinear systems beyond polynomials, ensuring safe system behavior.  We formulate the required conditions based solely on data for designing $k$-CBCs and synthesizing their safety controllers, demonstrating that these conditions generalize the standard CBC framework (\emph{i.e.,} $k = 1$). To achieve this, we develop a sum-of-squares (SOS) programming framework for our design approach using data from a single trajectory, ensuring the safe behavior of the unknown system. Our method is particularly valuable as it enables a systematic approach for handling even \emph{nonlinear nonpolynomial} dynamics, while $k$-CBCs are derived through SOS programming. We demonstrate the effectiveness of our method through a series of physical case studies, including a DC motor, an RLC circuit, a nonlinear \emph{nonpolynomial} car, and a nonlinear polynomial Lorenz attractor, all with unknown dynamics, illustrating its potential for ensuring system safety in diverse scenarios.

\noindent\textbf{Related Works.} A body of research has explored data-driven analysis of unknown systems using the \emph{scenario approach}. Existing results include the construction of either barrier certificates~\cite{nejati2023formal,lindemann2021learning,nejati2023data} or finite abstractions~\cite{kazemi2024data,lavaei2022data,devonport2021symbolic,ajeleye2023data,coppola2024data,banse2023data2,lavaei2023symbolic,samari2024data} using scenario approach for use in formal verification and controller synthesis. There is also a rich body of literature on data-driven methods using the \emph{single-trajectory approach}. Existing results include ensuring stability and invariance properties for unknown systems~\cite{de2019formulas,dai2023data,de2019formulas,Mahdieh2024singletrajectory}, constructing finite abstractions~\cite{samari2024Abstraction} as well as constructing CBCs for both continuous- and discrete-time systems~\cite{nejati2022data,wang2024convex,samari2024singletrajectory}. Other data-driven methods for constructing barrier certificates include the sublinear algorithm~\cite{han2015sublinear} and rapidly-exploring random trees~\cite{ahmad2022adaptive}.

In the literature on \emph{model-based} $k$-inductive barrier certificates, these methods have been proposed for both deterministic and stochastic systems, as explored in~\cite{Anand2021kSafety,anand2022k}. The work in~\cite{ren2024formal} leverages neural networks to find candidate $k$-inductive CBCs for systems with known  dynamics, offering an efficient approach to controller synthesis. For \emph{continuous-time} model-based systems, a related concept, $t$-barriers, has been introduced in~\cite{bak2018t}, expanding the application of barrier certificates to continuous-time dynamics and providing formal guarantees for safety across various time intervals.

To the best of our knowledge, only one study has explored \emph{data-driven} methods with $k$-induction thus far~\cite{murali2022scenario}. However, this study focuses exclusively on verification using $k$-inductive barrier certificates, \emph{without addressing controller synthesis}, and relies on the \emph{scenario approach}, which requires multiple i.i.d. samples. For instance, this approach required approximately 60,000 samples for a 2-dimensional case study, as reported in~\cite{murali2022scenario}. 
In contrast, our method requires only a single persistently excited trajectory with a horizon of at least $3$ time steps for the same example. Additionally, the scenario approach in~\cite{murali2022scenario} requires knowledge of the \emph{system's Lipschitz constant}, which may not be available for unknown systems, leading to further complexity as this constant must be estimated from the data. Our trajectory-based approach, in contrast, eliminates the need for such a Lipschitz constant, but assumes an extensive dictionary of nonlinear functions. It is worth mentioning that the scenario approach in~\cite{murali2022scenario} does not impose any restrictions on the form of the $k$-CBC, whereas in this work, we require it to be quadratic.

\noindent\textbf{Organization.} 
The rest of the paper is organized as follows: In Section~\ref{sec:problem}, we define the systems under analysis and outline the problem we aim to solve. Section~\ref{sec:linear-data} introduces our data-driven approach for discrete-time \emph{linear} systems, where we present our theorem for $k$-CBCs ensuring safety for this class of models. In Section~\ref{sec:nonlinear-data}, we extend these results to \emph{nonlinear systems}. Section~\ref{sec:SOS} details the sum-of-squares (SOS) formulations used to solve the $k$-CBC conditions. In Section~\ref{sec:case-studies}, we provide multiple physical case studies with unknown dynamics to demonstrate the effectiveness of our data-driven approach. Finally, Section~\ref{sec:conclusion} concludes the paper and discusses directions for future research.

\noindent\textbf{Notation.} We use $\reals,$ $\reals_{\geq 0}$, and $\reals_{>0}$ to denote the real numbers, non-negative  and positive real numbers, respectively. We utilize $\nats$ and $\nats_{>0}$ for the non-negative and positive integers, respectively. We denote $k$ iterations of system's dynamics by $x^{k+}$. Similarly, for a square matrix, we use $A^k$ to represent $k$ repeated multiplications of the matrix, \emph{e.g.,} when $k = 3$, then $A^3 = A A A$. A \emph{symmetric} matrix $P\in\reals^{n\times n}$ is said to be positive definite, if all its eigenvalues are positive. We define an $(n \times n)$ identity matrix by  $\identity_n$, while an $(n \times m)$ zero matrix is denoted by $\mathbf{0}_{n\times m}$. The horizontal concatenation of vectors $x_i \in \mathbb R^n$ into an $n \times N$ matrix is written as $\begin{bmatrix} x_1 & x_2 & \dots & x_N \end{bmatrix}$\!.

\section{Problem Formulation}\label{sec:problem}

To ensure a cohesive presentation and for the sake of readability, we first develop our data-driven framework for discrete-time \emph{linear} systems and then generalize it to \emph{nonlinear} systems beyond polynomials.

\begin{definition}
\label{def:linear-system-description}
    A discrete-time linear system (dt-LS) is described by
    \begin{equation}
    \label{eq:dt-LS}
        \Sigma\!: x^+ = Ax+Bu,    
    \end{equation}
    where $A\in\reals^{n\times n}$, $B\in\reals^{n\times m}$, $x\in X$ is the system state, $u\in U$ is the control input, with $X\subset\reals^n$, and $U\subset\reals^m$ being the state and input sets, respectively, while $x^+$ denotes the system's state at the next time step, i.e., $x^+ = x(k+1), k\in\mathbb N$. We assume that both matrices $A$ and $B$ are unknown in our setting.
\end{definition}

The following definition, borrowed from \cite{prajna2004safety}, introduces control barrier certificates, which are used to ensure safety specifications for discrete-time systems.

\begin{definition}[CBC -~\cite{prajna2004safety}]\label{def:safe_barrier-conditions} 
Consider a dt-LS $\Sigma$ in Definition~\ref{def:linear-system-description} with $X_{\mathcal I},X_{\mathcal U}\subset X$ being its initial and unsafe sets, respectively. A function $\mathcal{B}: X\rightarrow \reals_{\geq0}$ is  a control barrier certificate (CBC) for $\Sigma$ if there exist $\gamma,\lambda\in\reals_{\geq0}$, with $\lambda>\gamma$, such that
\begin{subequations}
    \begin{align}
    \label{eq:safe_CBC_cond1}
        \mathcal{B}(x) \leq \gamma, &\quad\quad\quad\forall x\in X_{\mathcal I}, \\
    \label{eq:safe_CBC_cond2}
    \mathcal{B}(x) \geq \lambda, &\quad\quad\quad\forall x\in X_{\mathcal U}, \end{align}
    and $ \forall x\in X, \exists u\in U$\!, 
    \begin{equation}
    \label{eq:safe_CBC_cond3}
    \mathcal{B}(x^+) \leq \mathcal{B}(x).
    \end{equation}
    \end{subequations}
\end{definition}

We denote by $x_{x_0u}(k)$ the state trajectory of $\Sigma$ at time $k\in\nats_{\geq 0}$ under the input signal $u(\cdot)$ starting from an initial condition $x_0=x(0)$.

As evident, condition~\eqref{eq:safe_CBC_cond3} requires the CBC to be non-increasing at every time step, which imposes a restriction. We now present the notion of a $k$-inductive CBC~\cite{Anand2021kSafety} as a relaxation of Definition~\ref{def:safe_barrier-conditions}.

\begin{definition}[$k$-CBC - \cite{Anand2021kSafety}]
    \label{def:safe_k-inductive-CBC}
    Given a dt-LS $\Sigma$ in Definition~\ref{def:linear-system-description}, a function $\mathcal{B}: X\rightarrow \reals_{\geq0}$ is  a $k$-inductive control barrier certificate ($k$-CBC) for $\Sigma$ if there exist $k\in\nats_{>0}$, $\gamma, \lambda, \epsilon\in\reals_{\geq0}$, with $\lambda>\gamma + (k-1)\epsilon$, such that
    \begin{subequations}
    \begin{align}
    \label{eq:safe_kCBC_cond1}
        \mathcal{B}(x) \leq \gamma, &\quad\quad\quad\forall x\in X_{\mathcal I}, \\
    \label{eq:safe_kCBC_cond2}
    \mathcal{B}(x) \geq \lambda, &\quad\quad\quad\forall x\in X_{\mathcal U}, \end{align}
    and $ \forall x\in X, \exists u\in U$,
    \begin{align}
    \label{eq:safe_kCBC_cond3}
    \mathcal{B}(x^+) &\leq \mathcal{B}(x) + \epsilon, \\
    \label{eq:safe_kCBC_cond4}
    \mathcal{B}(x^{k+}) &\leq \mathcal{B}(x).
    \end{align}
    \end{subequations}
\end{definition}

Notice that~\eqref{eq:safe_kCBC_cond3} is a relaxed version of~\eqref{eq:safe_CBC_cond3}, allowing a one-step increase of up to a threshold $\epsilon$, while condition~\eqref{eq:safe_kCBC_cond4} enforces condition~\eqref{eq:safe_CBC_cond3} but after $k$ time steps. The constraint $\lambda>\gamma + (k-1)\epsilon$ is crucial to avoid excessive relaxation of the new condition~\eqref{eq:safe_kCBC_cond3}. When $\epsilon$ equals zero and $k$ equals one, the original CBC conditions with $\lambda>\gamma$ is recovered.

The following theorem, borrowed from~\cite[Theorem 3.2]{Anand2021kSafety}, provides safety guarantees for discrete-time dynamical systems by leveraging the notion of $k$-inductive CBC from Definition~\ref{def:safe_k-inductive-CBC}.

\begin{theorem}[\cite{Anand2021kSafety}]
    Given a dt-LS $\Sigma$, suppose $\mathcal{B}$ is a $k$-CBC for $\Sigma$ as in Definition~\ref{def:safe_k-inductive-CBC}. Then, the state trajectory $x_{x_0u}(k)$ remains safe (i.e., $x_{x_0u}(k)\notin X_{\mathcal U}$) for any $x_0\in X_{\mathcal I}$ and $k\in\nats_{\geq 0}$, with a control input satisfying conditions~\eqref{eq:safe_kCBC_cond3} and~\eqref{eq:safe_kCBC_cond4}.
\end{theorem}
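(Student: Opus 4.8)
The plan is to track the value of $\mathcal{B}$ along an arbitrary trajectory and show it can never reach the level $\lambda$; by~\eqref{eq:safe_kCBC_cond2} this immediately precludes the trajectory from ever entering $X_{\mathcal U}$. Concretely, fix an initial state $x_0\in X_{\mathcal I}$ and a control input under which~\eqref{eq:safe_kCBC_cond3}--\eqref{eq:safe_kCBC_cond4} hold along the resulting trajectory, and write $b_t\Let\mathcal{B}\big(x_{x_0u}(t)\big)$ for $t\in\nats_{\geq 0}$ (I use $t$ for the time index to avoid the clash with the induction parameter $k$). The three ingredients are then: $b_0\leq\gamma$ from~\eqref{eq:safe_kCBC_cond1} since $x_0\in X_{\mathcal I}$; $b_{t+1}\leq b_t+\epsilon$ for every $t$ from~\eqref{eq:safe_kCBC_cond3}; and $b_{t+k}\leq b_t$ for every $t$ from~\eqref{eq:safe_kCBC_cond4}.

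The core of the argument is a Euclidean-division telescoping. Given an arbitrary $t\in\nats_{\geq 0}$, write $t=qk+r$ with $q\in\nats_{\geq 0}$ and $0\leq r\leq k-1$. Applying the $k$-step inequality~\eqref{eq:safe_kCBC_cond4} $q$ times along the chain $t,\,t-k,\,\dots,\,r$ gives $b_t\leq b_{t-k}\leq\cdots\leq b_r$, and applying the one-step inequality~\eqref{eq:safe_kCBC_cond3} $r$ times along $r,\,r-1,\,\dots,\,0$ gives $b_r\leq b_{r-1}+\epsilon\leq\cdots\leq b_0+r\epsilon$. Combining with $b_0\leq\gamma$ and $\epsilon\geq 0$, $r\leq k-1$ yields $b_t\leq \gamma+r\epsilon\leq\gamma+(k-1)\epsilon<\lambda$, where the last step is precisely the standing hypothesis $\lambda>\gamma+(k-1)\epsilon$ of Definition~\ref{def:safe_k-inductive-CBC}. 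Hence $\mathcal{B}\big(x_{x_0u}(t)\big)<\lambda$ for all $t$; were $x_{x_0u}(t)\in X_{\mathcal U}$ for some $t$, then~\eqref{eq:safe_kCBC_cond2} would force $\mathcal{B}\big(x_{x_0u}(t)\big)\geq\lambda$, a contradiction. It is worth highlighting that~\eqref{eq:safe_kCBC_cond3} is invoked at most $k-1$ times on any single prefix (the remainder $r$ never exceeds $k-1$), which is exactly why a total ``increase budget'' of $(k-1)\epsilon$ is what needs to be dominated by $\lambda-\gamma$, and why the case $r=0$ (a pure multiple of $k$) and the base cases $0\leq t<k$ are immediate special instances of the same chain.

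The step I expect to require the most care is the one silently assumed at the start: that there is a \emph{single} admissible input signal (equivalently, a single feedback law) under which both~\eqref{eq:safe_kCBC_cond3} and~\eqref{eq:safe_kCBC_cond4} hold \emph{at every time step} $t$ along the trajectory, and in particular that~\eqref{eq:safe_kCBC_cond4} is available at every $t$ and not merely at multiples of $k$. Definition~\ref{def:safe_k-inductive-CBC} supplies, for each $x\in X$, the existence of an input (law) witnessing these inequalities from $x$; the resolution is therefore to build the control signal by choosing, at each state $x_{x_0u}(t)$ actually visited, the corresponding witness, using that every reachable state lies in $X$ so that the ``$\forall x\in X$'' quantifier does apply there. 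One should check that this stitching is well defined across the block boundaries $0,k,2k,\dots$ (so that the $k$-step decrease composes to $b_{qk}\leq b_0$) and, more importantly, that the uniform-in-$t$ validity of~\eqref{eq:safe_kCBC_cond4} — which is what the telescoping down from $b_t$ to $b_r$ actually uses — follows from the per-state quantifier rather than only from composing disjoint length-$k$ blocks. Once this is in place, the telescoping argument above closes the proof, and setting $\epsilon=0$, $k=1$ recovers the classical CBC safety guarantee associated with Definition~\ref{def:safe_barrier-conditions}.
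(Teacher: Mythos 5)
Your proof is correct and follows essentially the same route as the source: the paper does not reprove this theorem (it is quoted from~\cite{Anand2021kSafety}), and your Euclidean-division telescoping — $t=qk+r$ with $0\le r\le k-1$, giving $\mathcal{B}(x_{x_0u}(t))\le \mathcal{B}(x_{x_0u}(0))+r\epsilon\le\gamma+(k-1)\epsilon<\lambda$ and hence a contradiction with~\eqref{eq:safe_kCBC_cond2} on $X_{\mathcal U}$ — is exactly the standard argument behind that result. The stitching concern you flag is resolved by reading the theorem's hypothesis literally (the trajectory is generated by an input under which~\eqref{eq:safe_kCBC_cond3} and~\eqref{eq:safe_kCBC_cond4} hold at every visited state), with the usual implicit caveat, shared by the cited source, that the trajectory remains in $X$ so that these per-state inequalities are available at every time step.
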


While the concept of $k$-CBC offers great potential for providing safety guarantees in dynamical systems with more relaxed conditions compared to conventional CBC, satisfying conditions \eqref{eq:safe_kCBC_cond3} and \eqref{eq:safe_kCBC_cond4} is not feasible due to the unknown dynamics involved in their left-hand side. This highlights the importance of developing \emph{data-driven} approaches to design such $k$-CBCs and their corresponding safety controllers for unknown systems based solely on input-state data.

We now summarize the problem we aim to address in this work.

\begin{resp}
\begin{problem}\label{Problem}
        Consider the discrete-time dynamical system (either~\eqref{eq:dt-LS} or~\eqref{eq:dt-NS}) with unknown matrices $A$ and $B$, and initial and unsafe sets $X_{\mathcal I}, X_{\mathcal U}\subset X$. Design a $k$-CBC $\mathcal B$ along with a controller $u$ solely based on a single input-state trajectory from the unknown dynamics to ensure the safety of the system.
    \end{problem}
\end{resp}

In the next section, we offer our data-driven approach to address Problem~\ref{Problem} for dt-LS in \eqref{eq:dt-LS}.

\section{Data-Driven Design of $k$-CBC for dt-LS} \label{sec:linear-data}

In our data-driven setting, we collect input-state data from the unknown discrete-time dynamical system over the time horizon $[0,1,\ldots,T-1]$, where $T\in\nats_{>0}$ is the number of collected samples:
\begin{subequations}\label{data}
\begin{align}
\label{eq:input-data}
    &\mathcal{U}_{0} = [u(0),u(1),\ldots,u(T-1)]\in\reals^{m\times T},\\
    \label{eq:state-data}
    &\mathcal{X}_{0} = [x(0),x(1),\ldots,x(T-1)] \in\reals^{n\times T},\\
    \label{eq:output-data}
    &\mathcal{X}_{1} = [x(1),x(2),\ldots,x(T)] \in\reals^{n\times T}.
\end{align}
\end{subequations}

We call the collected data in \eqref{eq:input-data}-\eqref{eq:output-data} a \emph{single input-state trajectory}.
 
Since both $x^+$ and $x^{k+}$ appear in conditions \eqref{eq:safe_kCBC_cond3} and \eqref{eq:safe_kCBC_cond4}, we now present the following lemma, inspired from~\cite{de2019formulas}, to obtain a data-based representation of both closed-loop dt-LS in~\eqref{def:linear-system-description} and its evolution after $k$ time steps.

\begin{lemma}\label{lem:Q-matrix}
    Let matrix $Q$ be a $(T\times n)$ matrix such that
    \begin{equation}
    \label{eq:Q-matrix-X0}
        \identity_n = \mathcal{X}_{0}Q,
    \end{equation}
    with $\mathcal{X}_{0}$ being an $(n\times T)$ full row-rank matrix. If one designs $u=Kx = \mathcal{U}_{0}Qx$, then the closed-loop system $x^+=Ax+Bu$ has the following data-based representation:
    \begin{equation}  \label{eq:A+BK=X1TQ}
        x^+ = \mathcal{X}_{1}Qx, \quad\text{equivalently}\quad
        A+BK = \mathcal{X}_{1}Q.
    \end{equation}
    Additionally, the data-based evolution of the system after $k$ steps is \begin{equation}
    	\label{eq:inductive-equation}
    	x^{k+} = (\mathcal{X}_{1}Q)^kx.
    \end{equation}
\end{lemma}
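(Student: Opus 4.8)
The plan is to prove Lemma~\ref{lem:Q-matrix} in three short steps, all of which are essentially linear algebra once the data-based identity~\eqref{eq:Q-matrix-X0} is available. First I would establish the closed-loop representation~\eqref{eq:A+BK=X1TQ}. Since the collected data satisfies the system dynamics columnwise, we have $\mathcal{X}_1 = A\mathcal{X}_0 + B\mathcal{U}_0$; this is just the statement that each sample $x(i+1) = Ax(i) + Bu(i)$ stacked horizontally. Right-multiplying by $Q$ and using $\mathcal{X}_0 Q = \identity_n$ gives $\mathcal{X}_1 Q = A + B\mathcal{U}_0 Q$. Choosing the controller gain $K = \mathcal{U}_0 Q$ (which is well-defined as a $(m\times n)$ matrix), we obtain $A + BK = \mathcal{X}_1 Q$, and therefore for any state $x$, the closed-loop successor is $x^+ = (A+BK)x = \mathcal{X}_1 Q x$.

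Next I would address the $k$-step evolution~\eqref{eq:inductive-equation}. Since the closed-loop map is the linear map $x \mapsto (A+BK)x$, iterating it $k$ times gives $x^{k+} = (A+BK)^k x$. Substituting the data-based expression $A+BK = \mathcal{X}_1 Q$ yields $x^{k+} = (\mathcal{X}_1 Q)^k x$, which is exactly~\eqref{eq:inductive-equation}. This step is a one-line induction on $k$: the base case $k=1$ is~\eqref{eq:A+BK=X1TQ}, and the inductive step applies the closed-loop map once more to $x^{k+}$.

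I would also remark on the existence of a matrix $Q$ satisfying~\eqref{eq:Q-matrix-X0}: because $\mathcal{X}_0$ is assumed to have full row rank $n$ (which in the nonlinear/lifted setting is the persistency-of-excitation condition), it admits a right inverse, so at least one such $(T\times n)$ matrix $Q$ exists; any such choice works for the lemma. The genuinely substantive content of the lemma is conceptual rather than computational: it is the observation that once $Q$ is fixed by the offline data, the \emph{unknown} matrices $A$ and $B$ no longer appear anywhere, since $A+BK$ has been rewritten purely in terms of the measured data $\mathcal{X}_1$ and $Q$ (with $K = \mathcal{U}_0 Q$ likewise data-based). The main obstacle, to the extent there is one, is not in the proof but in the interpretation — making clear that the same parametrization $Q$ simultaneously captures both the one-step closed loop $x^+$ and all its iterates $x^{k+}$, so that conditions~\eqref{eq:safe_kCBC_cond3}–\eqref{eq:safe_kCBC_cond4} become data-dependent constraints amenable to SOS programming. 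No deep estimate or inequality is needed; the proof is a direct substitution argument.
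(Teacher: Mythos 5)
Your proof is correct and follows essentially the same route as the paper: both arguments reduce to the data-consistency relation $\mathcal{X}_1 = A\mathcal{X}_0 + B\,\mathcal{U}_0$ (which the paper uses in the equivalent block form $\begin{bmatrix} B & A \end{bmatrix}\!\begin{bmatrix} \mathcal{U}_0 \\ \mathcal{X}_0 \end{bmatrix} = \mathcal{X}_1$) combined with $\mathcal{X}_0 Q = \identity_n$ to get $A+BK = \mathcal{X}_1 Q$, followed by the same one-line induction for the $k$-step iterate. Your added remark on the existence of $Q$ via the right inverse of the full row-rank $\mathcal{X}_0$ matches the paper's accompanying remark, so no gap remains.
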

\begin{proof}
Since $u=Kx = \mathcal{U}_{0}Qx$ and leveraging~\eqref{eq:Q-matrix-X0}, the closed-loop dt-LS can be written as
\begin{align*}
   x^+ \!=\!  Ax \!+\! B u \!=\! (A\!+\!BK)x = \begin{bmatrix}
        B & A
    \end{bmatrix} \!\!\begin{bmatrix}
        K \\ \identity_n
    \end{bmatrix} x \overset{\eqref{eq:Q-matrix-X0}}{=}
    \underbrace{\begin{bmatrix}
        B & A
    \end{bmatrix} \!\!\begin{bmatrix}
        \mathcal{U}_{0} \\ \mathcal{X}_{0}
    \end{bmatrix}}_{\mathcal{X}_{1} } Qx = \mathcal{X}_{1}Qx,
\end{align*}
with $\mathcal{U}_{0}$ and $\mathcal{X}_{0}$ as in~\eqref{eq:input-data} and~\eqref{eq:state-data}, respectively. Therefore, $x^+ = \mathcal{X}_{1}Qx$, equivalently, $A+BK = \mathcal{X}_{1}Q$, is the data-based representation of the closed loop dt-LS.
    
    \noindent Since $x^{k+}$ is obtained by applying $k$ iterations of $Ax+Bu$, it follows that
    \begin{align*}
    	x^+ &= (A+BK)x = \mathcal{X}_{1}Qx, \\
    	x^{2+} &= \mathcal{X}_{1}Qx^+ = (\mathcal{X}_{1}Q)^2x,
    \end{align*}
    and by induction, we have
    \begin{equation*}
    	x^{k+} = (\mathcal{X}_{1}Q)^kx,
    \end{equation*}
    thus completing the proof.
    \end{proof}

\begin{remark}
The required condition for the existence of the matrix $Q$ in \eqref{eq:Q-matrix-X0} is that $\mathcal{X}_{0}$ must be full row-rank. This condition also ensures that the collected data is persistently excited~\cite{willems2005note}. To fulfill this requirement, the number of data points $T$ must exceed $n$, i.e., $T > n$. Since $\mathcal{X}_0$ is constructed from data, this rank condition can be readily satisfied during the data collection process.
\end{remark}
    
We now define the structure of the $k$-CBC in the form $\mathcal{B}(x) = x^\top Px$, where $P \in \mathbb{R}^{n \times n}$ is a \emph{symmetric} positive-definite matrix. We leverage Lemma \ref{lem:Q-matrix} to present our main theorem in this section, which offers data-driven safety guarantees for unknown dt-LS using $k$-CBC.

\begin{theorem}
    \label{thm:safe_k-inductive-CBC}
    Consider an unknown dt-LS $x^+ = Ax+Bu$, with its data-based representations $x^+=\mathcal{X}_{1}Qx$ and $x^{k+} = (\mathcal{X}_{1}Q)^kx$. Suppose there exist a symmetric positive-definite matrix $P\in\reals^{n\times n}$, $k\in\nats_{>0}$, $\gamma, \lambda, \epsilon\in\reals_{\geq0}$, with $\lambda>\gamma + (k-1)\epsilon$, such that the following conditions are satisfied:
    \begin{subequations}
\begin{align}
    \label{eq:safe_kCBC_data1}
        x^\top Px &\leq \gamma, \quad\quad\quad\quad\:\forall x\in X_{\mathcal I}, \\
    \label{eq:safe_kCBC_data2}
    x^\top Px &\geq \lambda, \quad\quad\quad\quad\:\forall x\in X_{\mathcal U},\\
    \label{eq:safe_kCBC_data3}
    x^\top Q^\top \mathcal{X}_{1}^\top P\mathcal{X}_{1}Qx &\leq x^\top Px \!+\! \epsilon, \quad\forall x\in X,\\
    \label{eq:safe_kCBC_data4}
    ((\mathcal{X}_{1}Q)^k)^\top P(\mathcal{X}_{1}Q)^k &\leq P.
    \end{align}
    \end{subequations}
    Then $\mathcal{B}(x) = x^\top Px$ is a $k$-CBC and  $u = \mathcal{U}_{0}Qx$ is its corresponding safety controller for the unknown dt-LS.
\end{theorem}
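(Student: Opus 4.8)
The plan is to reduce Theorem~\ref{thm:safe_k-inductive-CBC} to the model-based $k$-CBC result (Theorem~\cite{Anand2021kSafety}) by showing that the data-based conditions \eqref{eq:safe_kCBC_data1}--\eqref{eq:safe_kCBC_data4} are exactly the instantiations of \eqref{eq:safe_kCBC_cond1}--\eqref{eq:safe_kCBC_cond4} for the candidate $\mathcal{B}(x) = x^\top P x$ once the closed-loop dynamics are expressed via Lemma~\ref{lem:Q-matrix}. First I would invoke Lemma~\ref{lem:Q-matrix}: under the control law $u = Kx = \mathcal{U}_0 Q x$, the closed-loop system satisfies $x^+ = \mathcal{X}_1 Q x$ and $x^{k+} = (\mathcal{X}_1 Q)^k x$, which crucially no longer involves the unknown $A,B$. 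Since $P$ is symmetric positive-definite, $\mathcal{B}(x) = x^\top P x \geq 0$ for all $x$, so $\mathcal{B}$ has the right codomain $\reals_{\geq 0}$.

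Next I would verify each condition in turn. Conditions \eqref{eq:safe_kCBC_data1} and \eqref{eq:safe_kCBC_data2} are verbatim \eqref{eq:safe_kCBC_cond1} and \eqref{eq:safe_kCBC_cond2} with $\mathcal{B}(x)=x^\top P x$, so nothing is needed there. For \eqref{eq:safe_kCBC_cond3}, substitute $x^+ = \mathcal{X}_1 Q x$ to get $\mathcal{B}(x^+) = (\mathcal{X}_1 Q x)^\top P (\mathcal{X}_1 Q x) = x^\top Q^\top \mathcal{X}_1^\top P \mathcal{X}_1 Q x$, which is precisely the left-hand side of \eqref{eq:safe_kCBC_data3}; hence \eqref{eq:safe_kCBC_data3} is equivalent to $\mathcal{B}(x^+)\le \mathcal{B}(x)+\epsilon$ for all $x\in X$. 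For \eqref{eq:safe_kCBC_cond4}, substitute $x^{k+} = (\mathcal{X}_1 Q)^k x$ to get $\mathcal{B}(x^{k+}) = x^\top \big((\mathcal{X}_1 Q)^k\big)^\top P (\mathcal{X}_1 Q)^k x$; condition \eqref{eq:safe_kCBC_data4} is a matrix inequality $\big((\mathcal{X}_1 Q)^k\big)^\top P (\mathcal{X}_1 Q)^k \preceq P$, and pre/post-multiplying by $x^\top$ and $x$ yields $\mathcal{B}(x^{k+}) \le \mathcal{B}(x)$ for every $x$ (in particular for all $x\in X$). The ordering hypothesis $\lambda > \gamma + (k-1)\epsilon$ is carried over unchanged.

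Having shown that $\mathcal{B}(x)=x^\top P x$ satisfies all four defining inequalities of a $k$-CBC in Definition~\ref{def:safe_k-inductive-CBC} together with the constant ordering constraint, and that the exhibited control input $u=\mathcal{U}_0 Q x$ realizes the required one-step and $k$-step decrease (so the existential quantifier ``$\exists u \in U$'' is witnessed constructively), I would conclude that $\mathcal{B}$ is indeed a $k$-CBC and $u=\mathcal{U}_0 Q x$ its safety controller; safety of the trajectories then follows from Theorem~\cite{Anand2021kSafety}. The only genuine subtlety — and the step I would treat most carefully — is the quantifier over $u$: the model-based definition asks for existence of a suitable input at each state, and we must make sure the data-based linear feedback $u=\mathcal{U}_0 Q x$ is a legitimate choice, i.e. that it keeps $u\in U$ (implicitly assumed, or enforced via the SOS encoding in Section~\ref{sec:SOS}) and that the same $K$ simultaneously validates both \eqref{eq:safe_kCBC_data3} and \eqref{eq:safe_kCBC_data4}, which it does because $\mathcal{X}_1 Q = A + BK$ is fixed once $Q$ is. Everything else is a direct substitution, so the proof is short; the burden of feasibility is shifted to finding $P$ and $Q$, which is handled later via SOS programming.
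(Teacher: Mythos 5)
Your proposal is correct and follows essentially the same route as the paper's proof: substitute the data-based closed-loop representations $x^+=\mathcal{X}_1Qx$ and $x^{k+}=(\mathcal{X}_1Q)^kx$ from Lemma~\ref{lem:Q-matrix} into $\mathcal{B}(x)=x^\top Px$ and read off conditions~\eqref{eq:safe_kCBC_cond1}--\eqref{eq:safe_kCBC_cond4} from~\eqref{eq:safe_kCBC_data1}--\eqref{eq:safe_kCBC_data4}. Your additional remark about the quantifier $\exists u\in U$ (i.e., that $u=\mathcal{U}_0Qx$ must lie in $U$) is a fair point the paper leaves implicit, but it does not change the argument.
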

\begin{proof}
    Since $\mathcal{B}(x) = x^\top Px$, it is straightforward that condition~\eqref{eq:safe_kCBC_data1}-\eqref{eq:safe_kCBC_data2} implies~\eqref{eq:safe_kCBC_cond1}-\eqref{eq:safe_kCBC_cond2}, respectively. We now proceed to show condition~\eqref{eq:safe_kCBC_cond3}. Considering the definition of $\mathcal{B}(x)$ and using the results of Lemma~\ref{lem:Q-matrix}, one has
    \begin{align*}
    \mathcal{B}(x^+)
    &= (Ax+Bu)^\top P(Ax+Bu) = ((A+BK)x)^\top P((A+BK)x) \\
    &= x^\top(A+BK)^\top P(A+BK)x\overset{\eqref{eq:A+BK=X1TQ}}{=}x^\top(\mathcal{X}_{1}Q)^\top P(\mathcal{X}_{1}Q)x=x^\top Q^\top\mathcal{X}_{1}^\top P\mathcal{X}_{1}Qx.
    \end{align*}
    By leveraging condition~\eqref{eq:safe_kCBC_data3}, one has 
    \begin{align*}
    	\mathcal{B}(x^+)
    	= x^\top Q^\top\mathcal{X}_{1}^\top P\mathcal{X}_{1}Qx \leq \underbrace{x^\top P x}_{\mathcal{B}(x)} + \epsilon.
    \end{align*}
    Lastly, we demonstrate condition~\eqref{eq:safe_kCBC_cond4}:
    \begin{align*}
         \mathcal{B}(x^{k+})  &\overset{\eqref{eq:inductive-equation}}{=} ((\mathcal{X}_{1}Q)^kx)^\top P((\mathcal{X}_{1}Q)^kx)\\
         &= x^\top ((\mathcal{X}_{1}Q)^k)^\top P((\mathcal{X}_{1}Q)^k)x\overset{\eqref{eq:safe_kCBC_data4}}{\leq}
          \underbrace{x^\top P x}_{\mathcal{B}(x)}.
    \end{align*}
     Consequently, $\mathcal{B}(x) = x^\top Px$
    is a $k$-CBC and $u=Kx=\mathcal{U}_{0}Qx$ is its corresponding safety controller for the unknown dt-LS, thus completing the proof.
\end{proof}

We present Algorithm~\ref{alg:linear} that provides simplified steps for finding the $k$-CBC and its safety controller for dt-LS. It is worth highlighting that since the matrix $Q$ is first computed from \eqref{eq:Q-matrix-X0} following Step 2 of Algorithm~\ref{alg:linear}, there is no bilinearity in conditions \eqref{eq:safe_kCBC_data3}-\eqref{eq:safe_kCBC_data4}, as they are solved solely for $P$ and $\epsilon$.

\begin{algorithm}[t]
		\caption{Design of $k$-CBC and safety controller for dt-LS}\label{alg:linear}
		\begin{algorithmic}[1]
			\Require Regions of interest $X,X_{\mathcal I},X_{\mathcal U}$, relaxation horizon $k$
			\State Collect trajectories $\mathcal{U}_0, \mathcal{X}_0,\mathcal{X}_1$, according to~\eqref{data}, with $\mathcal{X}_0$ being full row-rank
			\State Solve~\eqref{eq:Q-matrix-X0} for $Q$
			\If{Symmetric positive-definite $P$ and $\lambda,\gamma,\epsilon$ with $\lambda>\gamma + (k-1)\epsilon$ are found by satisfying~\eqref{eq:safe_kCBC_data1}-\eqref{eq:safe_kCBC_data4}}
			\Statex \textbf{Return:} $k$-CBC $\mathcal{B}(x) = x^\top Px$ and safety controller $u = \mathcal{U}_{0}Qx$
			\Else
			\Statex \textbf{Return:} {Feasible solution not found with given $k$}
			\EndIf
		\end{algorithmic}
\end{algorithm}

In the next section, we generalize our data-driven approach  to accommodate discrete-time \emph{nonlinear systems}.

\section{Discrete-Time Nonlinear Systems}
\label{sec:nonlinear-data}

\begin{definition}
	\label{def:nonlinear-system-description}
	A discrete-time nonlinear system (dt-NS) is described by
	\begin{equation}
		\label{eq:dt-NS}
		\Sigma: x^+ = A{M}(x)+Bu,    
	\end{equation}
	where $A\in\reals^{n\times N}$, $B\in\reals^{n\times m}$, and $u\in U$ is a control input, with $U\subset\reals^m$ being its input set. In addition, ${M}(x)\in\reals^N$ is a transition map partitioned into linear terms in states $x\in \reals^{n}$ and nonlinear terms in states $\mathcal Z(x)\in \reals^{N-n}$, structured as
    \begin{equation*}
        M(x)= \begin{bmatrix}
        	x\vspace{-0.25cm}\\
        	\tikz\draw [thin,dashed] (0,0) -- (1.25,0);\\
        	\mathcal Z(x)
        \end{bmatrix}\!\!.
    \end{equation*}
\end{definition}

In our work, both matrices $A$ and $B$ are treated as \emph{unknown}, reflecting real-world scenarios. While the exact form of ${M}(x)$ is assumed to be unknown, we are given an extended vector of ${M}(x)$, referred to as the \emph{exaggerated dictionary}, which includes all nonlinear terms relevant to the system dynamics along with additional \emph{redundant terms} to ensure comprehensive representation of unknown systems (cf. the first nonlinear case study). If the unknown system is polynomial, informed by physical insights, an \emph{upper bound on the maximum degree} of ${M}(x)$ suffices to construct it, capturing all possible combinations of states up to that degree (cf. the second nonlinear case study).

We now present the following lemma, to obtain a data-based representation of closed-loop dt-NS in~\eqref{eq:dt-NS}.

\begin{lemma}
\label{lem:nonlinear-Q-matrix}
    Consider a matrix $Q\in\reals^{T\times N}$ such that
    \begin{equation}
    \label{eq:nonlinear-theta=NQ}
        \identity_N = \mathcal{M}_{0}Q,
    \end{equation}
    where $Q=[Q_1~Q_2]$, $Q_1\in\reals^{T\times n}$ and $Q_2\in\reals^{T\times (N-n)}$, and with  
    \begin{equation}
        \label{eq:data-N0T}
        \mathcal{M}_{0} = \begin{bmatrix}
            &x(0), &x(1), &\ldots, &x(T-1)\\
            &\mathcal{Z}(x(0)), &\mathcal{Z}(x(1)), &\ldots, &\mathcal{Z}(x(T-1))
        \end{bmatrix}
    \end{equation}  
    being an $(N\times T)$ full row-rank matrix. Then the controller designed as $u=K{M}(x)= \mathcal{U}_{0}Q{M}(x)$ leads to the data-based representation of the closed-loop dynamics as
\begin{equation}\label{eq:nonlinearA+BK=X1TQ}
        x^+ \!=\! \mathcal{X}_{1}Q_1x + \mathcal{X}_{1}Q_2\mathcal{Z}(x).
    \end{equation}
\end{lemma}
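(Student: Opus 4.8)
The plan is to mirror the proof of Lemma~\ref{lem:Q-matrix}, adapting the bookkeeping to the lifted coordinates $M(x)$. First I would substitute the proposed controller $u = K M(x) = \mathcal{U}_0 Q M(x)$ into the open-loop dt-NS~\eqref{eq:dt-NS}, obtaining
\begin{align*}
x^+ = A M(x) + B u = A M(x) + B K M(x) = (A + BK) M(x) = \begin{bmatrix} B & A \end{bmatrix} \begin{bmatrix} K \\ \identity_N \end{bmatrix} M(x).
\end{align*}
Then I would insert the identity $\identity_N = \mathcal{M}_0 Q$ from~\eqref{eq:nonlinear-theta=NQ} between the stacked matrix and $M(x)$, so that
\begin{align*}
x^+ = \begin{bmatrix} B & A \end{bmatrix} \begin{bmatrix} K \\ \identity_N \end{bmatrix} \mathcal{M}_0 Q M(x) = \begin{bmatrix} B & A \end{bmatrix} \begin{bmatrix} K \mathcal{M}_0 \\ \mathcal{M}_0 \end{bmatrix} Q M(x).
\end{align*}
The key observation is that $K = \mathcal{U}_0 Q$ together with $\mathcal{M}_0 Q = \identity_N$ gives $K \mathcal{M}_0 Q = \mathcal{U}_0 Q \mathcal{M}_0 Q = \mathcal{U}_0 Q$, and $\mathcal{M}_0 Q = \identity_N$; hence the stacked product $\begin{bmatrix} K \\ \identity_N \end{bmatrix} \mathcal{M}_0 Q$ collapses, and one identifies $\begin{bmatrix} B & A \end{bmatrix} \begin{bmatrix} \mathcal{U}_0 \\ \mathcal{M}_0 \end{bmatrix} = B \mathcal{U}_0 + A \mathcal{M}_0 = \mathcal{X}_1$, exactly the data matrix of one-step successors, since each column satisfies $x(i+1) = A M(x(i)) + B u(i)$. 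This yields $x^+ = \mathcal{X}_1 Q M(x)$, equivalently $A + BK = \mathcal{X}_1 Q$ as maps on the lifted space.

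Finally I would split $Q = [Q_1 ~ Q_2]$ and $M(x) = \begin{bmatrix} x \\ \mathcal{Z}(x) \end{bmatrix}$ to write
\begin{align*}
x^+ = \mathcal{X}_1 Q M(x) = \mathcal{X}_1 \begin{bmatrix} Q_1 & Q_2 \end{bmatrix} \begin{bmatrix} x \\ \mathcal{Z}(x) \end{bmatrix} = \mathcal{X}_1 Q_1 x + \mathcal{X}_1 Q_2 \mathcal{Z}(x),
\end{align*}
which is precisely~\eqref{eq:nonlinearA+BK=X1TQ}. I do not expect a genuine obstacle here; the only subtle point worth being careful about is the identity $\mathcal{X}_1 = \begin{bmatrix} B & A \end{bmatrix} \begin{bmatrix} \mathcal{U}_0 \\ \mathcal{M}_0 \end{bmatrix}$, which must be justified column-by-column from the system equation evaluated along the collected trajectory, and the implicit use of the full-row-rank assumption on $\mathcal{M}_0$ to guarantee that a $Q$ satisfying~\eqref{eq:nonlinear-theta=NQ} exists (so that the manipulation is not vacuous). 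Unlike Lemma~\ref{lem:Q-matrix}, there is no $k$-step iteration claim in this statement, so the induction step is absent; if desired one could add that $x^{k+}$ is obtained by composing the map $x \mapsto \mathcal{X}_1 Q_1 x + \mathcal{X}_1 Q_2 \mathcal{Z}(x)$ with itself $k$ times, but since $\mathcal{Z}$ is nonlinear this no longer has the clean closed form $(\mathcal{X}_1 Q)^k$, which is likely why the lemma stops at the one-step representation.
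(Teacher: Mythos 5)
Your proof is correct and takes essentially the same route as the paper: substitute the controller, use $K=\mathcal{U}_{0}Q$ and $\identity_N=\mathcal{M}_{0}Q$ to rewrite $(A+BK)M(x)$ as $\mathcal{X}_{1}QM(x)$ via the data identity $\mathcal{X}_{1}=A\mathcal{M}_{0}+B\,\mathcal{U}_{0}$, and then split $Q=[Q_1~Q_2]$ against $M(x)=[x^\top~\mathcal{Z}(x)^\top]^\top$. Your insertion of $\identity_N=\mathcal{M}_{0}Q$ followed by the collapse is just an equivalent rearrangement of the paper's direct substitution of $\mathcal{U}_{0}Q$ and $\mathcal{M}_{0}Q$ for $K$ and $\identity_N$, and your closing remark about the absent $k$-step claim matches the paper, which treats that separately via the nonlinearity-cancellation lemma.
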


\begin{proof}
By considering the controller as $u=K{M}(x)=\mathcal{U}_{0}Q{M}(x)$, and according to \eqref{eq:nonlinear-theta=NQ}, we have
\begin{align*}
   x^+ \!= \! (A+BK){M}(x) &=\begin{bmatrix}
        B & A
    \end{bmatrix} \!\!\begin{bmatrix}
        K \\ \identity_N
    \end{bmatrix} \!\!{M}(x)\overset{\eqref{eq:nonlinear-theta=NQ}}{=}
    \begin{bmatrix}
        B & A
    \end{bmatrix} \!\!\begin{bmatrix}
        \mathcal{U}_{0} \\ \mathcal{M}_{0}
    \end{bmatrix} \!\!Q{M}(x)\\
    &=\underbrace{\begin{bmatrix}
        B & A
    \end{bmatrix} \!\!\begin{bmatrix}
        \mathcal{U}_{0} \\ \mathcal{M}_{0}
    \end{bmatrix}}_ {\mathcal{X}_{1}}\!\!\underbrace{\begin{bmatrix}
        Q_1~Q_2
    \end{bmatrix}}_Q\underbrace{\begin{bmatrix}
        x\\
        \mathcal{Z}(x)
    \end{bmatrix}}_{{M}(x)}= \mathcal{X}_{1}Q_1x + \mathcal{X}_{1}Q_2\mathcal{Z}(x),
\end{align*}
with $\mathcal{U}_{0}$ and $\mathcal{M}_{0}$ as in~\eqref{eq:input-data} and~\eqref{eq:data-N0T}, respectively, thereby concluding the proof.
\end{proof}

\begin{remark}
	To ensure that $\mathcal M_0$ in \eqref{eq:nonlinear-theta=NQ} is a full row-rank matrix, the data horizon $T$ must be greater than $N$. This condition is easily verifiable, as $\mathcal M_0$ is constructed from collected data.
\end{remark}

Lemma~\ref{lem:nonlinear-Q-matrix} establishes the dt-NS based on data comprising both linear and nonlinear component. We now consider the controller synthesis technique of nonlinearity cancellation~\cite{de2023learning2} and demonstrate its applicability in the following lemma in computing $k$ evolution steps of the system.

\begin{lemma}
\label{lem:k-evolutions-nonlinear}
    Consider the unknown dt-NS in~\eqref{eq:dt-NS}. The data-based evolution of the system after $k$ steps is acquired as
    \begin{equation}
    	\label{eq:nonlinear-inductive-equation}
    	x^{k+} = (\mathcal{X}_{1}Q_1)^kx,
    \end{equation} under nonlinearity cancellation
    \begin{equation}
    \label{eq:nonlinearity-cancellation}
        \mathcal{X}_1Q_2 = \mathbf{0}_{n\times (N-n)}.
    \end{equation}
\end{lemma}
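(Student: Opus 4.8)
The plan is to start from the data-based representation of the closed-loop dt-NS established in Lemma~\ref{lem:nonlinear-Q-matrix}, namely $x^+ = \mathcal{X}_1 Q_1 x + \mathcal{X}_1 Q_2 \mathcal{Z}(x)$, and impose the nonlinearity-cancellation constraint~\eqref{eq:nonlinearity-cancellation}, $\mathcal{X}_1 Q_2 = \mathbf{0}_{n\times(N-n)}$. Substituting this into the closed-loop expression kills the nonlinear term, yielding $x^+ = \mathcal{X}_1 Q_1 x$, i.e., the closed-loop system becomes effectively \emph{linear} in the state with system matrix $\mathcal{X}_1 Q_1$. This is the crucial observation: once the nonlinearity is cancelled, iterating the dynamics no longer reintroduces $\mathcal{Z}(x)$.

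From there I would argue by induction on $k$, mirroring the proof of Lemma~\ref{lem:Q-matrix}. The base case $k=1$ is exactly $x^+ = \mathcal{X}_1 Q_1 x = (\mathcal{X}_1 Q_1)^1 x$. For the inductive step, assume $x^{k+} = (\mathcal{X}_1 Q_1)^k x$; then applying one more step of the closed-loop map to the state $x^{k+}$ gives $x^{(k+1)+} = \mathcal{X}_1 Q_1 \, x^{k+} + \mathcal{X}_1 Q_2 \mathcal{Z}(x^{k+}) = \mathcal{X}_1 Q_1 (\mathcal{X}_1 Q_1)^k x = (\mathcal{X}_1 Q_1)^{k+1} x$, where the middle term vanishes again by~\eqref{eq:nonlinearity-cancellation}. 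This closes the induction and establishes~\eqref{eq:nonlinear-inductive-equation}.

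I do not anticipate a serious technical obstacle here; the lemma is essentially a bookkeeping consequence of Lemma~\ref{lem:nonlinear-Q-matrix} together with the cancellation constraint. The one point worth being careful about is making explicit \emph{why} the nonlinear term drops out at \emph{every} iterate and not just the first: it is because the coefficient matrix multiplying $\mathcal{Z}(\cdot)$ in the closed-loop map is the fixed matrix $\mathcal{X}_1 Q_2$, evaluated on whatever state is fed in, and $\mathcal{X}_1 Q_2 = \mathbf{0}$ annihilates that term regardless of the argument of $\mathcal{Z}$. I would state this explicitly in the induction step so that the reader sees the cancellation is uniform in the state. A minor secondary remark I might include is that $\mathcal{X}_1 Q_2 = \mathbf{0}$ is a \emph{linear} constraint on the decision variables (given $Q$ from~\eqref{eq:nonlinear-theta=NQ}), so it is compatible with the SOS program; but that belongs more to the surrounding discussion than to the proof itself.
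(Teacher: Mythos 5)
Your proposal is correct and follows essentially the same route as the paper: substitute the cancellation condition~\eqref{eq:nonlinearity-cancellation} into the closed-loop representation from Lemma~\ref{lem:nonlinear-Q-matrix} to obtain $x^+ = \mathcal{X}_1 Q_1 x$, then iterate/induct to get $x^{k+} = (\mathcal{X}_1 Q_1)^k x$. Your explicit remark that the cancellation holds at every iterate because $\mathcal{X}_1 Q_2 = \mathbf{0}$ annihilates $\mathcal{Z}(\cdot)$ regardless of its argument is a slightly more careful statement of the same induction the paper performs.
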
\begin{proof}
   Since $x^{k+}$ is obtained by applying $k$ iterations of $A M(x)+Bu$, it follows that
    \begin{align*}
    	x^+ &= \mathcal{X}_{1}Q_1x + \mathcal{X}_{1}Q_2\mathcal{Z}(x)\stackrel{\eqref{eq:nonlinearity-cancellation}} {=} \mathcal{X}_{1}Q_1x, \\
    	x^{2+} &= \mathcal{X}_{1}Q_1x^+ = (\mathcal{X}_{1}Q_1)^2x,
    \end{align*}
    and by induction, we have
    \begin{equation*}
    	x^{k+} = (\mathcal{X}_{1}Q_1)^kx,
    \end{equation*}
    which concludes the proof.
\end{proof}

\begin{remark}
	Condition~\eqref{eq:nonlinearity-cancellation}  is readily satisfied in scenarios where the nonlinear terms are in the same channel as the control input $u$ (i.e., matched nonlinearities)—see nonlinear case studies.
\end{remark}

Within the same structure of the $k$-CBC as $\mathcal{B}(x) = x^\top Px$, we now leverage Lemma \ref{lem:nonlinear-Q-matrix} and Lemma \ref{lem:k-evolutions-nonlinear} to present our main result in this section, which provides data-driven safety guarantees for unknown dt-NS.

\begin{theorem}
    \label{thm:nonlinearsafe_k-inductive-CBC}
    Consider an unknown dt-NS in~\eqref{eq:dt-NS}, with its data-based representation in Lemma \ref{lem:nonlinear-Q-matrix} and Lemma \ref{lem:k-evolutions-nonlinear}. Suppose there exist a symmetric positive-definite matrix $P\in\reals^{n\times n}$, $k\in\nats_{>0}$, $\gamma, \lambda, \epsilon\in\reals_{\geq0}$, with $\lambda>\gamma + (k-1)\epsilon$, such that the following conditions are satisfied:
    \begin{subequations}
\begin{align}
    \label{eq:nonlinear-safe_kCBC_data1}
        x^\top Px \leq \gamma, &\quad\quad\forall x\in X_{\mathcal I}, \\
    \label{eq:nonlinear-safe_kCBC_data2}
    x^\top Px \geq \lambda, &\quad\quad\forall x\in X_{\mathcal U},\\
    \label{eq:nonlinear-safe_kCBC_data3}
    x^\top \!Q_1^\top \!\mathcal{X}_{1}^\top \!P\mathcal{X}_{1}Q_1x &\leq x^\top \!Px \!+\!\epsilon, ~~ \forall x\in X,\\
    \label{eq:nonlinear-safe_kCBC_data4}
    ((\mathcal{X}_{1}Q_1)^k)^\top P(\mathcal{X}_{1}Q_1)^k &\leq P.
    \end{align}
    \end{subequations}
    Then $\mathcal{B}(x) = x^\top Px$ is a $k$-CBC and $u = \mathcal{U}_{0}Q{M}(x)$ is its corresponding safety controller for the unknown dt-NS.
\end{theorem}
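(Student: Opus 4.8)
The plan is to mirror the proof of Theorem~\ref{thm:safe_k-inductive-CBC} almost verbatim, substituting the nonlinear data-based representations from Lemma~\ref{lem:nonlinear-Q-matrix} and Lemma~\ref{lem:k-evolutions-nonlinear} in place of their linear counterparts. First, I would observe that since $\mathcal{B}(x) = x^\top Px$, conditions~\eqref{eq:nonlinear-safe_kCBC_data1} and~\eqref{eq:nonlinear-safe_kCBC_data2} are literally conditions~\eqref{eq:safe_kCBC_cond1} and~\eqref{eq:safe_kCBC_cond2} of Definition~\ref{def:safe_k-inductive-CBC}, so nothing is needed there. The nonnegativity of $\mathcal{B}$ follows from $P$ being symmetric positive-definite.

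Next, for the one-step condition~\eqref{eq:safe_kCBC_cond3}, I would expand $\mathcal{B}(x^+)$ using the closed-loop representation~\eqref{eq:nonlinearA+BK=X1TQ}, namely $x^+ = \mathcal{X}_1 Q_1 x + \mathcal{X}_1 Q_2 \mathcal{Z}(x)$. Here I would invoke the nonlinearity-cancellation assumption~\eqref{eq:nonlinearity-cancellation}, $\mathcal{X}_1 Q_2 = \mathbf{0}_{n\times(N-n)}$, so that $x^+ = \mathcal{X}_1 Q_1 x$; then $\mathcal{B}(x^+) = x^\top Q_1^\top \mathcal{X}_1^\top P \mathcal{X}_1 Q_1 x$, and condition~\eqref{eq:nonlinear-safe_kCBC_data3} gives $\mathcal{B}(x^+) \leq x^\top P x + \epsilon = \mathcal{B}(x) + \epsilon$, which is~\eqref{eq:safe_kCBC_cond3}. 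For the $k$-step condition~\eqref{eq:safe_kCBC_cond4}, I would use~\eqref{eq:nonlinear-inductive-equation} from Lemma~\ref{lem:k-evolutions-nonlinear}, i.e. $x^{k+} = (\mathcal{X}_1 Q_1)^k x$, so that $\mathcal{B}(x^{k+}) = x^\top ((\mathcal{X}_1 Q_1)^k)^\top P (\mathcal{X}_1 Q_1)^k x$, and condition~\eqref{eq:nonlinear-safe_kCBC_data4} yields $\mathcal{B}(x^{k+}) \leq x^\top P x = \mathcal{B}(x)$, which is~\eqref{eq:safe_kCBC_cond4}. The constraint $\lambda > \gamma + (k-1)\epsilon$ is carried over directly as a hypothesis. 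Having verified all conditions of Definition~\ref{def:safe_k-inductive-CBC}, I conclude $\mathcal{B}(x) = x^\top Px$ is a $k$-CBC and, by construction in Lemma~\ref{lem:nonlinear-Q-matrix}, $u = \mathcal{U}_0 Q M(x)$ is the corresponding safety controller.

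I do not expect a genuine obstacle here, as the argument is a direct substitution. The only subtle point worth stating explicitly is that condition~\eqref{eq:nonlinear-inductive-equation} (and hence the clean form of~\eqref{eq:safe_kCBC_data4}) already presupposes the nonlinearity-cancellation~\eqref{eq:nonlinearity-cancellation}; this same assumption is what makes $\mathcal{B}(x^+)$ collapse to a quadratic form in $x$ alone in~\eqref{eq:nonlinear-safe_kCBC_data3}, so one should flag that~\eqref{eq:nonlinearity-cancellation} is being used in \emph{both} places rather than just in the $k$-step inequality. Beyond that bookkeeping, the proof is a short chain of equalities and the three inequality invocations, exactly parallel to the linear case.
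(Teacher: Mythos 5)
Your proof is correct and follows essentially the same route as the paper's: verify \eqref{eq:safe_kCBC_cond1}--\eqref{eq:safe_kCBC_cond2} directly, expand $\mathcal{B}(x^+)$ via \eqref{eq:nonlinearA+BK=X1TQ} with the cancellation \eqref{eq:nonlinearity-cancellation} and apply \eqref{eq:nonlinear-safe_kCBC_data3}, then use \eqref{eq:nonlinear-inductive-equation} with \eqref{eq:nonlinear-safe_kCBC_data4} for the $k$-step condition. Your remark that \eqref{eq:nonlinearity-cancellation} is invoked both in the one-step and the $k$-step arguments matches how the paper uses it as well.
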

\begin{proof}
    Since $\mathcal{B}(x) = x^\top Px$, it is clear that condition~\eqref{eq:nonlinear-safe_kCBC_data1}-\eqref{eq:nonlinear-safe_kCBC_data2} implies~\eqref{eq:safe_kCBC_cond1}-\eqref{eq:safe_kCBC_cond2}, respectively. We now proceed to show condition~\eqref{eq:safe_kCBC_cond3}. Using the results of Lemmas~\ref{lem:nonlinear-Q-matrix} and \ref{lem:k-evolutions-nonlinear}, we have
    \begin{align*}
    \mathcal{B}(x^+)
    &= (A{M}(x)+Bu)^\top P(A{M}(x)+Bu) \\
    &\stackrel{\eqref{eq:nonlinearA+BK=X1TQ}} {=} (\mathcal{X}_{1}Q_1x+\!\!\!\!\!\underbrace{\mathcal{X}_1Q_2}_{\mathbf{0}_{n\times (N-n)}}\!\!\!\!\!\mathcal{Z}(x))^\top P(\mathcal{X}_{1}Q_1x+\!\!\!\!\!\underbrace{\mathcal{X}_1Q_2}_{\mathbf{0}_{n\times (N-n)}}\!\!\!\!\!\mathcal{Z}(x))\\ 
    &\stackrel{\eqref{eq:nonlinearity-cancellation}} {=} (\mathcal{X}_{1}Q_1x)^\top P(\mathcal{X}_{1}Q_1x)=x^\top Q_1^\top\mathcal{X}_{1}^\top P\mathcal{X}_{1}Q_1x.
    \end{align*}
    By leveraging condition~\eqref{eq:nonlinear-safe_kCBC_data3}, one has 
    \begin{align*}
    	\mathcal{B}(x^{+}) = x^\top Q_1^\top\mathcal{X}_{1}^\top P\mathcal{X}_{1}Q_1x\leq \underbrace{x^\top P x}_{\mathcal{B}(x)}  + \epsilon.
    \end{align*}
    As the final step of the proof, we show that condition~\eqref{eq:safe_kCBC_cond4} holds true, as well. From \eqref{eq:nonlinear-inductive-equation}, one has
    \begin{align*}
         \mathcal{B}(x^{k+}) &= ((\mathcal{X}_{1}Q_1)^kx)^\top P((\mathcal{X}_{1}Q_1)^kx)\\
          &= x^\top ((\mathcal{X}_{1}Q_1)^k)^\top P((\mathcal{X}_{1}Q_1)^k)x\overset{\eqref{eq:nonlinear-safe_kCBC_data4}}{\leq} \underbrace{x^\top P x}_{\mathcal{B}(x)}.
    \end{align*}
    Consequently, $\mathcal{B}(x) = x^\top Px$
    is a $k$-CBC and $u=K{M}(x)=\mathcal{U}_{0}Q{M}(x)$ is its corresponding safety controller for the unknown dt-NS, which completes the proof.
\end{proof}

As a direct consequence of Theorem~\ref{thm:nonlinearsafe_k-inductive-CBC}, the following corollary highlights the special case where $k=1$, reducing the problem to the CBC (if it exists), as described in Definition~\ref{def:safe_barrier-conditions}.

\begin{corollary}
    \label{thm:nonlinearsafe_CBC}
    Consider an unknown dt-NS $x^+ = A{M}(x)+Bu$, with its data-based representation $x^+=\mathcal{X}_{1}Q_1x$ under nonlinearity cancellation~\eqref{eq:nonlinearity-cancellation}. Suppose there exist a symmetric positive-definite matrix $P\in\reals^{n\times n}$ and a matrix $H\in \reals^{T\times n}$, with $Q_1 = HP$, and constants $\gamma, \lambda\in\reals_{\geq0}$, with $\lambda>\gamma$, such that the following conditions are satisfied:
    \begin{subequations}
\begin{align}\label{New7}
    & \mathcal M_0Q_2 = \begin{bmatrix}
    	\mathbf{0}_{n\times (N-n)}\\
    	\identity_{N-n}
    \end{bmatrix}\!\!,\\\label{New7-1}
    	&\mathcal M_0H = \begin{bmatrix}
    	P^{-1}\\
    	\mathbf{0}_{(N-n)\times n}
    \end{bmatrix}\!\!,\\
    \label{eq:nonlinear-safe_CBC_data1}
        &x^\top Px \leq \gamma, \quad\quad\quad\quad\forall x\in X_{\mathcal I}, \\
    \label{eq:nonlinear-safe_CBC_data2}
    &x^\top Px \geq \lambda, \quad\quad\quad\quad\forall x\in X_{\mathcal U},\\
    \label{eq:nonlinear-safe_CBC_data3}
    &\begin{bmatrix}
    	P^{-1} & \mathcal{X}_1H  \\
    	H^\top\mathcal{X}_1^\top & P^{-1}
    \end{bmatrix}\geq 0.
    \end{align}
    \end{subequations}
    Then $\mathcal{B}(x) = x^\top Px$ is a conventional CBC (with $k = 1$) and $u = K {M}(x)= \mathcal{U}_{0}Q{M}(x)$ is its corresponding safety controller for the unknown dt-NS.
\end{corollary}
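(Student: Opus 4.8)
The plan is to reduce the corollary to the $k=1$ instance of Theorem~\ref{thm:nonlinearsafe_k-inductive-CBC} (equivalently, to Definition~\ref{def:safe_barrier-conditions}) by showing that the auxiliary conditions \eqref{New7}, \eqref{New7-1} together with the change of variables $Q_1 = HP$ (i) recover the hypotheses of Lemma~\ref{lem:nonlinear-Q-matrix}, so that the closed loop has the data-based representation used throughout, and (ii) make the linear matrix inequality \eqref{eq:nonlinear-safe_CBC_data3} equivalent to the CBC decrease condition. Conditions \eqref{eq:nonlinear-safe_CBC_data1}--\eqref{eq:nonlinear-safe_CBC_data2} then carry over verbatim.

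First I would verify the closed-loop representation. Writing $Q = [Q_1~Q_2]$ with $Q_1 = HP$, condition \eqref{New7} gives $\mathcal{M}_0 Q_2 = \big[\begin{smallmatrix}\mathbf{0}_{n\times(N-n)}\\ \identity_{N-n}\end{smallmatrix}\big]$, while \eqref{New7-1} gives $\mathcal{M}_0 Q_1 = (\mathcal{M}_0 H)P = \big[\begin{smallmatrix}P^{-1}\\ \mathbf{0}_{(N-n)\times n}\end{smallmatrix}\big]P = \big[\begin{smallmatrix}\identity_n\\ \mathbf{0}_{(N-n)\times n}\end{smallmatrix}\big]$. Concatenating, $\mathcal{M}_0 Q = \identity_N$, so $Q$ satisfies the hypothesis of Lemma~\ref{lem:nonlinear-Q-matrix}; hence with $u = \mathcal{U}_0 Q M(x)$ the closed loop is $x^+ = \mathcal{X}_1 Q_1 x + \mathcal{X}_1 Q_2 \mathcal{Z}(x)$, and under the assumed nonlinearity cancellation \eqref{eq:nonlinearity-cancellation}, Lemma~\ref{lem:k-evolutions-nonlinear} (with $k=1$) yields $x^+ = \mathcal{X}_1 Q_1 x$.

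Next, since $\mathcal{B}(x) = x^\top P x \ge 0$ with $P \succ 0$, conditions \eqref{eq:nonlinear-safe_CBC_data1}--\eqref{eq:nonlinear-safe_CBC_data2} are exactly \eqref{eq:safe_CBC_cond1}--\eqref{eq:safe_CBC_cond2}. For the decrease condition \eqref{eq:safe_CBC_cond3}, I compute $\mathcal{B}(x^+) = (\mathcal{X}_1 Q_1 x)^\top P (\mathcal{X}_1 Q_1 x) = x^\top Q_1^\top \mathcal{X}_1^\top P \mathcal{X}_1 Q_1 x$, so it suffices to establish the matrix inequality $Q_1^\top \mathcal{X}_1^\top P \mathcal{X}_1 Q_1 \preceq P$ (this gives $\mathcal{B}(x^+)\le\mathcal{B}(x)$ pointwise on $X\subseteq\reals^n$). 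Substituting $Q_1 = HP$, the left-hand side equals $P H^\top \mathcal{X}_1^\top P \mathcal{X}_1 H P$, and congruence by the invertible symmetric matrix $P^{-1}$ makes $Q_1^\top \mathcal{X}_1^\top P \mathcal{X}_1 Q_1 \preceq P$ equivalent to $H^\top \mathcal{X}_1^\top P \mathcal{X}_1 H \preceq P^{-1}$. This last inequality is precisely the Schur complement of \eqref{eq:nonlinear-safe_CBC_data3} with respect to its $(1,1)$-block $P^{-1}\succ 0$. Chaining these equivalences, \eqref{eq:nonlinear-safe_CBC_data3} implies \eqref{eq:safe_CBC_cond3}, and with $\lambda > \gamma$ all conditions of Definition~\ref{def:safe_barrier-conditions} hold, so $\mathcal{B}(x) = x^\top P x$ is a CBC with $u = \mathcal{U}_0 Q M(x)$ as its safety controller.

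The computations are routine; the one genuine idea — rather than an obstacle — is the substitution $Q_1 = HP$, which is exactly what linearizes the otherwise bilinear coupling between the decision variables $P$ and $Q_1$ in the decrease condition and turns it into the SDP-tractable LMI \eqref{eq:nonlinear-safe_CBC_data3}. The only thing to watch is that the congruence transformations by $P$ (resp.\ $P^{-1}$) and the Schur-complement step are two-way equivalences, so the implication genuinely runs from \eqref{eq:nonlinear-safe_CBC_data3} to the decrease condition; tracking signs and invertibility there closes the argument.
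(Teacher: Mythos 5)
Your proposal is correct and follows essentially the same route as the paper's proof: deriving $\mathcal{M}_0 Q = \identity_N$ from \eqref{New7}--\eqref{New7-1} with $Q_1 = HP$ so Lemma~\ref{lem:nonlinear-Q-matrix} applies, and then using the Schur complement of \eqref{eq:nonlinear-safe_CBC_data3} together with a congruence by $P$ to obtain $\mathcal{B}(x^+)\le\mathcal{B}(x)$. The only difference is cosmetic (you run the congruence from $Q_1^\top\mathcal{X}_1^\top P\mathcal{X}_1 Q_1 \preceq P$ toward $P^{-1}$, the paper goes the other way), so no further comment is needed.
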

    
   \begin{proof}
   	According to $\mathcal{B}(x) = x^\top Px$, it is clear that condition~\eqref{eq:nonlinear-safe_CBC_data1}-\eqref{eq:nonlinear-safe_CBC_data2} implies~\eqref{eq:safe_kCBC_cond1}-\eqref{eq:safe_kCBC_cond2}, respectively.
   	
   	By defining $Q_1=H P$, it follows directly that condition~\eqref{New7} along with~\eqref{New7-1} leads to
   	\begin{align}\label{new8}
   		\mathcal M_0\underbrace{[Q_1 ~~~ Q_2]}_Q= \identity_N,
   	\end{align}
   	which is in compliance with condition \eqref{eq:nonlinear-theta=NQ}. 
   	
   	 We now have
   	\begin{align}\notag
   		\mathcal{B}(x^+)
   		&= (A{M}(x)+Bu)^\top P(A{M}(x)+Bu) \\\notag
   		&\stackrel{\eqref{eq:nonlinearA+BK=X1TQ}} {=} (\mathcal{X}_{1}Q_1x+\!\!\!\!\!\underbrace{\mathcal{X}_1Q_2}_{\mathbf{0}_{n\times (N-n)}}\!\!\!\!\!\mathcal{Z}(x))^\top P(\mathcal{X}_{1}Q_1x+\!\!\!\!\!\underbrace{\mathcal{X}_1Q_2}_{\mathbf{0}_{n\times (N-n)}}\!\!\!\!\!\mathcal{Z}(x))\\ \notag
   		&\stackrel{\eqref{eq:nonlinearity-cancellation}} {=} (\mathcal{X}_{1}Q_1x)^\top P(\mathcal{X}_{1}Q_1x)=x^\top Q_1^\top\mathcal{X}_{1}^\top P\mathcal{X}_{1}Q_1x\\\label{New88}
   		&=x^\top \underbrace{P^\top P^{-1}}_{\identity_n}Q_1^\top \mathcal{X}_{1}^\top P\mathcal{X}_{1}Q_1\underbrace{P^{-1}P}_{\identity_n}x=x^\top P^\top H^\top \mathcal{X}_{1}^\top P\mathcal{X}_{1}HPx,
   	\end{align}
   	  with $Q_1P^{-1} = H$ (since $Q_1=H P$). According to the Schur complement \cite{zhang2006schur}, and considering  condition \eqref{eq:nonlinear-safe_CBC_data3}, it is well-established that
   	\begin{align*}
   		\begin{bmatrix}
   			P^{-1} & \mathcal{X}_1H  \\
   			H^\top\mathcal{X}_1^\top & P^{-1}
   		\end{bmatrix}\geq 0& \Leftrightarrow P^{-1} - H^\top \!\!\mathcal{X}_{1}^\top \!\!P\mathcal{X}_{1}H \geq 0.
   	\end{align*}
   	Thus, it is clear that
   	\begin{align}\label{New87}
   		        x^\top P^\top H^\top \mathcal{X}_{1}^\top P\mathcal{X}_{1}Q_1HPx \leq \overbrace{x^\top  \underbrace{P^\top P^{-1}}_{\identity_n} Px}^{\mathcal B(x)}.
   	\end{align}
   	Be considering~\eqref{New88},\eqref{New87}, one can deduce that 
   	$$\mathcal{B}(x^+) \leq \mathcal{B}(x),$$
   	which concludes the proof.
   \end{proof}

\section{Computation of $k$-CBC via SOS Optimization}
\label{sec:SOS}

Computing $k$-CBCs is generally a challenging task, even within a model-based framework. Here, we provide a sum-of-squares (SOS) optimization approach to derive $k$-CBC together with their corresponding safety controllers from data, assuming the regions of interest $X$, $X_{\mathcal I}$, and $X_{\mathcal U}$ are semi-algebraic~\cite{bochnak2013real}. Specifically, a semi-algebraic set $X \subseteq \mathbb{R}^n$ can be described by a vector of polynomials $g(x)$, \emph{i.e.,} $X = \{x \in \mathbb{R}^n \mid g(x) \geq 0\}$, with the inequalities applied element-wise.

To achieve this, we reformulate conditions~\eqref{eq:nonlinear-safe_kCBC_data1}-\eqref{eq:nonlinear-safe_kCBC_data3} into a sum-of-squares (SOS) optimization program~\cite{parrilo2003semidefinite}, which can be solved using \textsf{SOSTOOLS} \cite{papachristodoulou2013sostools}. Meanwhile, condition \eqref{eq:nonlinear-safe_kCBC_data4} can be addressed using a semidefinite programming (SDP) solver such as \textsf{SeDuMi}~\cite{sturm1999using}.

\begin{lemma}
	\label{lem:nonlinear-safe-SOS-constraints}
	Consider a dt-NS $\Sigma$, with the sets $X$, $X_{\mathcal I}$, and $X_{\mathcal U}$ being represented by vectors of polynomials $g(x)$, $g_{\mathcal I}(x)$, and $g_{\mathcal U}(x)$, respectively. Suppose there exist an SOS polynomial $x^\top Px$ with a symmetric positive-definite matrix $P\in\reals^{n\times n}$, constants $k\in\nats_{>0}$, $\gamma,\lambda,\epsilon\in\reals_{\geq0}$, with $\lambda>\gamma + (k-1)\epsilon$, and vectors of SOS polynomials $L(x), L_{\mathcal I}(x),$ and $L_{\mathcal U}(x)$ of appropriate dimensions such that 
	\begin{subequations}
		\begin{align}
			\label{eq:nonlinear_safe_SOS_cond1}
			-&x^\top Px - L_{\mathcal I}^\top(x)g_{\mathcal I}(x) + \gamma, \\
			\label{eq:nonlinear_safe_SOS_cond2}
			&x^\top Px - L_{\mathcal U}^\top(x)g_{\mathcal U}(x) - \lambda, \\
			\label{eq:nonlinear_safe_SOS_cond3}
			-&x^\top Q_1^\top \mathcal{X}_{1}^\top P\mathcal{X}_{1}Q_1x + x^\top Px - L^\top(x)g(x) + \epsilon,
		\end{align}
	\end{subequations}
    are SOS polynomials, while  condition~\eqref{eq:nonlinear-safe_kCBC_data4} is also satisfied.
	Then the function $x^\top Px$ is a $k$-CBC as in Theorem~\ref{thm:nonlinearsafe_k-inductive-CBC} satisfying conditions~\eqref{eq:nonlinear-safe_kCBC_data1}-\eqref{eq:nonlinear-safe_kCBC_data4}, and $u = \mathcal{U}_{0}Q{M}(x)$ is its corresponding safety controller for the unknown dt-NS.
\end{lemma}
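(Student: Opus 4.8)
The plan is to show that the three SOS conditions \eqref{eq:nonlinear_safe_SOS_cond1}--\eqref{eq:nonlinear_safe_SOS_cond3} are sufficient conditions for the semi-algebraic versions of \eqref{eq:nonlinear-safe_kCBC_data1}--\eqref{eq:nonlinear-safe_kCBC_data3}, and then invoke Theorem~\ref{thm:nonlinearsafe_k-inductive-CBC} directly. The key tool is the standard Positivstellensatz-type observation that if $L(x)$ is a vector of SOS polynomials and $g(x)\geq 0$ element-wise on a set, then $L^\top(x)g(x)\geq 0$ on that set, so adding $-L^\top(x)g(x)$ to an expression can only decrease it there; hence if the whole combination is SOS (thus globally nonnegative), the original expression is nonnegative on the set in question.

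Concretely, first I would take condition~\eqref{eq:nonlinear_safe_SOS_cond1}: since it is SOS, it is $\geq 0$ for all $x\in\reals^n$, in particular for $x\in X_{\mathcal I}$. On $X_{\mathcal I}$ we have $g_{\mathcal I}(x)\geq 0$ and $L_{\mathcal I}(x)$ SOS, so $L_{\mathcal I}^\top(x)g_{\mathcal I}(x)\geq 0$; therefore $-x^\top Px + \gamma \geq L_{\mathcal I}^\top(x)g_{\mathcal I}(x) \geq 0$, which is exactly \eqref{eq:nonlinear-safe_kCBC_data1}. Second, the same argument applied to \eqref{eq:nonlinear_safe_SOS_cond2} over $X_{\mathcal U}$ gives $x^\top Px - \lambda \geq L_{\mathcal U}^\top(x)g_{\mathcal U}(x)\geq 0$, i.e.\ \eqref{eq:nonlinear-safe_kCBC_data2}. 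Third, applying it to \eqref{eq:nonlinear_safe_SOS_cond3} over $X$ yields $-x^\top Q_1^\top\mathcal{X}_1^\top P\mathcal{X}_1 Q_1 x + x^\top Px + \epsilon \geq L^\top(x)g(x)\geq 0$, which rearranges to \eqref{eq:nonlinear-safe_kCBC_data3}. Condition~\eqref{eq:nonlinear-safe_kCBC_data4} is assumed to hold directly, and the ordering constraint $\lambda > \gamma + (k-1)\epsilon$ is carried over verbatim, so all four hypotheses of Theorem~\ref{thm:nonlinearsafe_k-inductive-CBC} are met. Finally, I would note that requiring $x^\top Px$ itself to be an SOS polynomial with $P$ symmetric positive-definite guarantees $\mathcal B\colon X\to\reals_{\geq 0}$ as needed. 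Invoking Theorem~\ref{thm:nonlinearsafe_k-inductive-CBC} then gives that $\mathcal B(x)=x^\top Px$ is a $k$-CBC and $u=\mathcal{U}_0 Q M(x)$ is the corresponding safety controller, completing the proof.

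This proof is essentially bookkeeping: there is no substantive obstacle, since it is a routine ``SOS relaxation implies set-constrained inequality'' argument followed by citing the already-proved theorem. The only point that requires a line of care is checking that the three SOS certificates in \eqref{eq:nonlinear_safe_SOS_cond1}--\eqref{eq:nonlinear_safe_SOS_cond3} are written with the correct signs so that the rearrangements land exactly on \eqref{eq:nonlinear-safe_kCBC_data1}--\eqref{eq:nonlinear-safe_kCBC_data3} rather than their reverses — in particular noting that \eqref{eq:nonlinear_safe_SOS_cond1} carries an overall minus sign on $x^\top Px$ whereas \eqref{eq:nonlinear_safe_SOS_cond2} and \eqref{eq:nonlinear_safe_SOS_cond3} do not. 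A secondary minor point worth stating is that the semi-algebraic descriptions $X=\{x : g(x)\geq 0\}$ etc.\ are exactly what makes the multipliers $L(x), L_{\mathcal I}(x), L_{\mathcal U}(x)$ usable in this way; no Archimedean or compactness assumption is needed for the \emph{sufficiency} direction being proved here.
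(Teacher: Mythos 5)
Your proof is correct and is exactly the routine argument the paper has in mind: the paper omits a detailed proof, stating only that it ``naturally extends from the existing works''~\cite{prajna2004safety, Anand2021kSafety}, and those works use precisely your reasoning that an SOS (hence globally nonnegative) certificate minus the nonnegative multiplier terms $L^\top(x)g(x)$ yields conditions~\eqref{eq:nonlinear-safe_kCBC_data1}--\eqref{eq:nonlinear-safe_kCBC_data3} on the respective semi-algebraic sets, after which Theorem~\ref{thm:nonlinearsafe_k-inductive-CBC} applies. Nothing further is needed.
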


The proof of this lemma naturally extends from the existing works in the literature, including~\cite{prajna2004safety, Anand2021kSafety}.

\begin{algorithm}[t]
	\caption{Design of $k$-CBC and safety controller for dt-NS}\label{alg:nonlinear}
	\begin{algorithmic}[1]
	\Require Regions of interest $X,X_{\mathcal I},X_{\mathcal U}$, extended dictionary ${M}(x)$, relaxation horizon $k$
	\State Collect trajectories $\mathcal{U}_0, \mathcal{X}_0,\mathcal{X}_1$, according to~\eqref{data}
	\State Construct the \emph{full row-rank} matrix $\mathcal{M}_0$ in~\eqref{eq:data-N0T} from data
    \State Solve\footnotemark\!~\eqref{eq:nonlinearity-cancellation}, \eqref{New7}, \eqref{New7-1}, and~\eqref{eq:nonlinear-safe_CBC_data3} for designing $P$, $H$, and $Q$ (with partitions $Q_1$ and $Q_2$)
    \If {Feasible $\lambda$ and $\gamma$ (with $\lambda>\gamma$) is found by satisfying~\eqref{eq:nonlinear-safe_CBC_data1} and~\eqref{eq:nonlinear-safe_CBC_data2}}
    \Statex\textbf{Return:} Conventional CBC $\mathcal{B}(x) = x^\top Px$ and safety controller $u = \mathcal{U}_{0}Q{M}(x)$ 
    \ElsIf{Discard $P$ and potentially find new symmetric positive-definite $P$ and $\lambda,\gamma,\epsilon$ with $\lambda>\gamma + (k-1)\epsilon$ satisfying~\eqref{eq:nonlinear_safe_SOS_cond1}-\eqref{eq:nonlinear_safe_SOS_cond3}  and~\eqref{eq:nonlinear-safe_kCBC_data4}}
    \Statex \textbf{Return:} $k$-CBC $\mathcal{B}(x) = x^\top Px$ and safety controller $u = \mathcal{U}_{0}Q{M}(x)$ 
    \Else
        \Statex\textbf{Return:} {Feasible solution not found with given $k$}
    \EndIf
	\end{algorithmic}
\end{algorithm}
\footnotetext{We define $\mathcal{E} = P^{-1}$ in conditions~\eqref{New7-1} and~\eqref{eq:nonlinear-safe_CBC_data3}, while ensuring it is a \emph{symmetric positive-definite} matrix. Once these conditions are satisfied and $\mathcal{E}$ is designed, the matrix $P$ can be derived as $\mathcal{E}^{-1} = (P^{-1})^{-1} = P$.}

\begin{remark}
	To accommodate any \emph{arbitrary number} of unsafe regions $X_{\mathcal{U}_i}$, where $i\in\{1,\ldots, z\}$, condition \eqref{eq:nonlinear_safe_SOS_cond2} should be reiterated and enforced for each distinct unsafe region.
\end{remark}

\begin{remark}
	Using similar reasoning as in Lemma \ref{lem:nonlinear-safe-SOS-constraints}, conditions \eqref{eq:safe_kCBC_data1}-\eqref{eq:safe_kCBC_data3} can be reformulated as an SOS optimization problem for dt-LS.
\end{remark}

We present Algorithm~\ref{alg:nonlinear} that provides the required steps for finding the $k$-CBC and its safety controller for dt-NS. As a general control strategy, one can first aim to find a conventional CBC (with $k=1$) and synthesize its controller via the conditions of Corollary~\ref{thm:nonlinearsafe_CBC}. If a feasible CBC is not found, one can utilize the pre-synthesized controller for $k=1$ with the matrix gain $\mathcal{U}_{0}Q$ and attempt to find a relaxed $k$-CBC, according to Theorem~\ref{thm:nonlinearsafe_k-inductive-CBC}, by designing new $P$ and level sets $\gamma$ and $\lambda$.

\section{Case Studies}
\label{sec:case-studies}

We employ a series of physical benchmarks with unknown dynamics to showcase the efficacy of our data-driven approach for the construction of $k$-inductive CBCs and their corresponding safety controllers.

\subsection{Linear Systems}

{\bf RLC Circuit.} Consider the following RLC circuit~\cite{Anand2021kSafety}
\begin{align*}
    \Sigma : \begin{cases}
        x_1^+ = x_1 + \tau(-\frac{R}{L}x_1 - \frac{1}{L}x_2) + u_1, \\
        x_2^+ = x_2 + \tau(\frac{1}{C}x_1) + u_2,
    \end{cases}
    \end{align*}
    where $x_1$ and $x_2$ denote the current and the voltage of the circuit, respectively, $\tau=0.5$ is the sampling time, $R=2$ is the series resistance, $L=9$ is the series inductance, and $C=0.5$ is the capacitance of the circuit. The dynamics of the RLC circuit are in the form of~\eqref{eq:dt-LS} with
    \begin{align*}
        A=\begin{bmatrix}
            1 \!-\! \tau\frac{R}{L} && -\frac{1}{L}\\
            \tau\frac{1}{C} && 1
        \end{bmatrix}\!\!, \quad B = \begin{bmatrix}
            1 && 0\\
            0 && 1
        \end{bmatrix}\!\!.
    \end{align*}
    We assume both matrices $A$ and $B$ are \emph{unknown} to us.

        \begin{figure}[t]
    \centering
    \includegraphics[width=0.47\columnwidth]{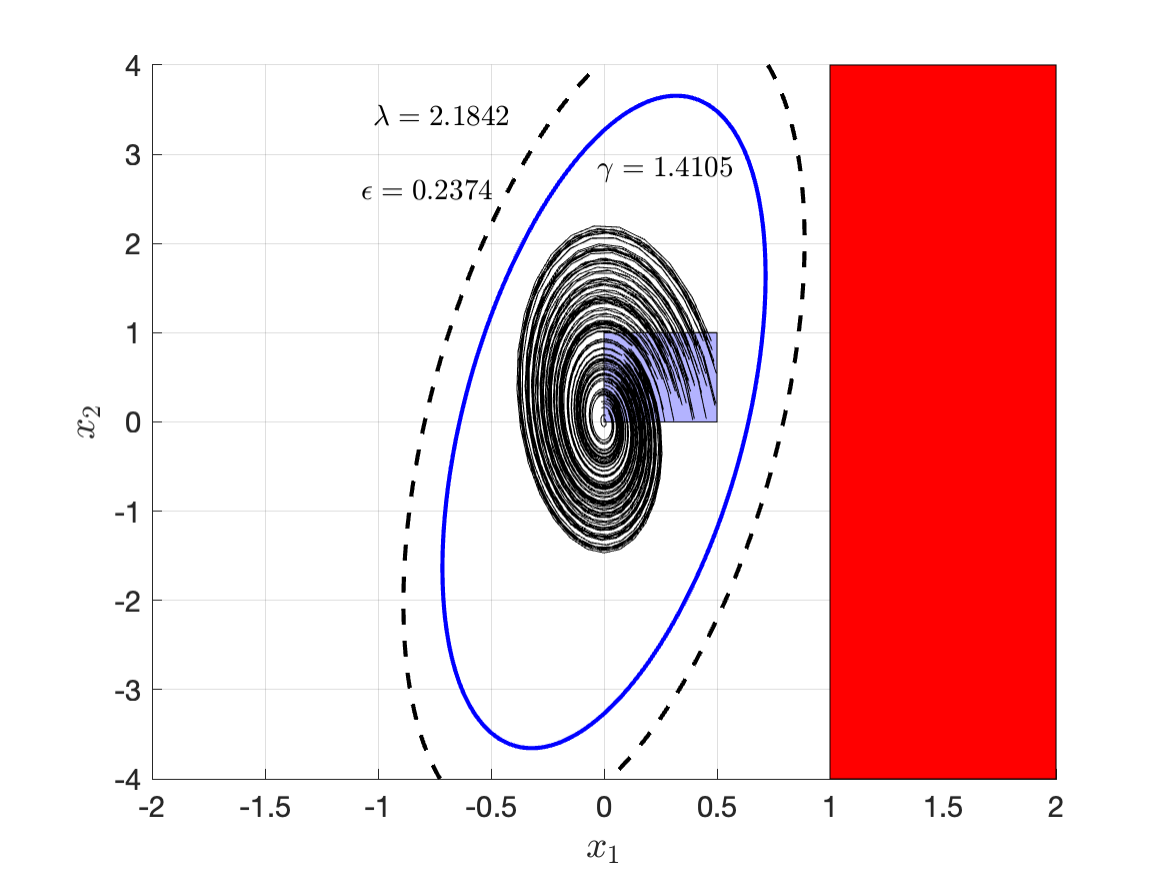}
    \caption{Closed-loop state trajectories of the RLC circuit ($100$ trajectories shown in black) under the designed controller and $k$-CBC with $k=3$, as defined in~\eqref{eq:RLC_controller}. Purple and red boxes are initial and unsafe regions, respectively. The level sets $\gamma$ and $\lambda$ are depicted in blue and dashed black, respectively.}
    \label{fig:safe-RLC}
\end{figure}
    
     The state space of the system is $X=[-2,2]\times[-4,4]$, with initial set $X_{\mathcal I}=[0,0.5]\times[0,1]$, and unsafe set $X_{\mathcal U}=[1,2]\times[-4,4]$. We consider Theorem~\ref{thm:safe_k-inductive-CBC} to construct a $k$-CBC of the form $x^\top Px$. Following Algorithm~\ref{alg:linear}, we first collect trajectories $\mathcal{U}_{0},\mathcal{X}_{0}$ and $\mathcal{X}_{1}$ with horizon $T=30$. After solving\footnote{Matrix $Q$ is not reported due to its large size of $(30 \times 2$).} $Q$ according to~\eqref{eq:Q-matrix-X0}, we satisfy conditions~\eqref{eq:safe_kCBC_data1}-\eqref{eq:safe_kCBC_data4} for $k=3$ with designed parameters
    \begin{align*}
    	P = {\begin{bmatrix}
    	    3.3600  & -0.2943\\
   -0.2943  &  0.1285
    	\end{bmatrix}}\!,\quad \epsilon=0.2374,\quad \gamma=1.4105, \quad
    	\lambda=2.1842.
    \end{align*}
    Then the $k$-CBC with its corresponding safety controller are designed as 
    \begin{align}\notag
    \mathcal{B}(x) &= 3.36x_1^2 - 0.58856x_1x_2 + 0.1285x_2^2,\\\label{eq:RLC_controller}
        u &= \begin{bmatrix}
            0.024862 &&& 0.0075704\\
            0.078083 &&& -0.02691
        \end{bmatrix}\!\!\begin{bmatrix}
            x_1\\
            x_2
        \end{bmatrix}\!\!.
    \end{align}

Closed-loop state trajectories of the RLC circuit under the designed controller using data are depicted in Fig.~\ref{fig:safe-RLC}. We highlight that without the relaxation of $k$-induction, no solution could be obtained for the RLC circuit when $k=1$.

{\bf DC Motor.} Consider the following DC motor~\cite{adewuyi2013dc}
\begin{align*}
    \Sigma : \begin{cases}
        x_1^+= x_1- \tau(\frac{R}{L}x_1 + \frac{k_{dc}}{L}x_2) + u_1, \\
        x_2^+ = x_2 + \tau(\frac{k_{dc}}{J}x_1 - \frac{b}{J}x_2) + u_2,
    \end{cases}
    \end{align*}
    where $x_1,~x_2,~R=1,~L=0.01,$ and $J=0.01$ are the armature current, the rotational speed of the shaft, the electrical resistance, the electrical inductance, and the moment of inertia of the rotor, respectively. Additionally, $\tau=0.01$, $b=1$, and $k_{dc}=0.01$, represent, respectively, the sampling time, the motor torque, and the back electromotive force. The dynamics of the DC motor are in the form of~\eqref{eq:dt-LS} with 
    \begin{align*}
        A=\begin{bmatrix}
            1 \!-\! \tau\frac{R}{L} && -\tau\frac{k_{dc}}{L}\\
            \tau\frac{k_{dc}}{J} && 1-\tau\frac{b}{J}
        \end{bmatrix}\!\!, \quad B = \begin{bmatrix}
            1 && 0\\
            0 && 1
        \end{bmatrix}\!\!.
    \end{align*}
    We assume both matrices $A$ and $B$ are \emph{unknown}.

    \begin{figure}[t]
    \centering
    \includegraphics[width=0.47\columnwidth]{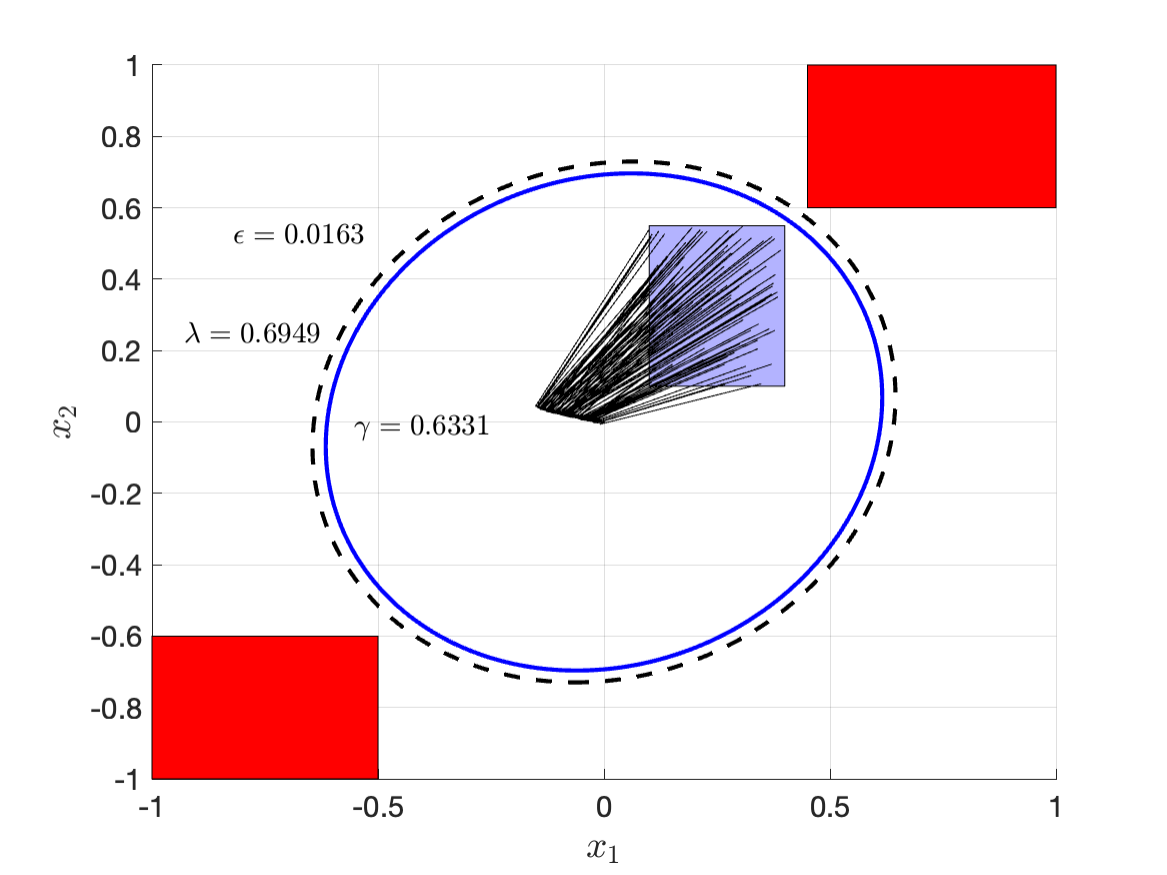}
    \caption{Closed-loop state trajectories of the DC motor ($100$ trajectories shown in black) under the designed controller and $k$-CBC with $k=3$, as defined in~\eqref{eq:DC_controller}. Purple and red boxes are initial and unsafe regions, respectively. The level sets $\gamma$ and $\lambda$ are depicted in blue and dashed black, respectively.}
    \label{fig:safe-DC}
\end{figure}

    The regions of interest are given as $X={[-1,1]^2}$, $X_{\mathcal I}=[0.1,0.4]\times[0.1,0.55]$, and $X_{\mathcal U}={[0.45,1]\times[0.6,1]}\cup{[-1,-0.5]\times[-1,-0.6]}$. We collect trajectories $\mathcal{U}_{0},\mathcal{X}_{0}$ and $\mathcal{X}_{1}$ with horizon $T=30$. We then solve~\eqref{eq:Q-matrix-X0} for\footnote{Matrix $Q$ is not reported due to its large size of $(30 \times 2$).} $Q$ and fulfill conditions~\eqref{eq:safe_kCBC_data1}-\eqref{eq:safe_kCBC_data4} for $k=3$ with designed parameters
    \begin{align*}
    	P = {\begin{bmatrix}
    	    1.6873 &&  -0.1467\\
   -0.1467  &&  1.3181
    	\end{bmatrix}}\!, \quad\epsilon=0.0163, \quad\gamma=0.6331, \quad
    	\lambda=0.6949.
    \end{align*}
    Then the $k$-CBC with its corresponding safety controller are designed as 
        \begin{align}\notag
    	\mathcal{B}(x) &= 1.6873x_1^2 -0.29336x_1x_2 + 1.3181x_2^2,\\\label{eq:DC_controller}
    	u &= \begin{bmatrix}
    		0.063901 & - 0.28251\\
    		-0.05539 & 0.090067
    	\end{bmatrix}\!\!\begin{bmatrix}
    		x_1\\
    		x_2
    	\end{bmatrix}\!\!.
    \end{align}
    
Closed-loop state trajectories of the DC motor under the designed data-driven controller are depicted in Fig.~\ref{fig:safe-DC}.

\subsection{Nonlinear Systems}

{\bf Nonpolynomial Car.} As a nonlinear benchmark, we consider a 3-dimensional car with the following nonpolynomial dynamics~\cite{edwards2024fossil}
\begin{equation*}
    \Sigma : \begin{cases}
        x_1^+ = x_1 + \tau\text{sin}(x_3)+\tau u_1, \\
        x_2^+ = x_2 + \tau\text{cos}(x_3) +\tau u_2,\\
        x_3^+ = (1+ \tau)x_3 +\tau u_3,
    \end{cases}
\end{equation*}
where $\tau= 0.1$ is the sampling time, $x_1$ and $x_2$ are the car's coordinate position, while $x_3$ is the car's heading angle. The dynamics can be written in the form of~\eqref{eq:dt-NS} with 
\begin{align*}
 &A=\begin{bmatrix}
    1 && 0 && 0 && \tau&& 0\\
    0 && 1 && 0&&0&&\tau\\
    0&&0&&1\!+\!\tau&&0&&0
\end{bmatrix}\!\!, \quad B =\begin{bmatrix}
    \tau && 0 && 0\\
    0 && \tau && 0\\
    0&&0&&\tau
\end{bmatrix}\!\!,\\ & M(x) = \begin{bmatrix}
    x_1 &&
    x_2&&
    x_3&&
    \text{sin}(x_3)&&\text{cos}(x_3)
\end{bmatrix}^\top\!\!\!\!.\end{align*}
While  both matrices $A$ and $B$, and the exact vector $M(x)$ are \emph{unknown}, we assume the \emph{extended dictionary} ${M}(x) = [x_1~~x_2~~x_3~~ \text{sin}(x_3)~~\text{cos}(x_3)~~x_1^2~~x_2^2]^\top$ with irrelevant terms $x_1^2,x_2^2$ provided.

\begin{figure}[t]
	\centering
	\includegraphics[width=0.47\columnwidth]{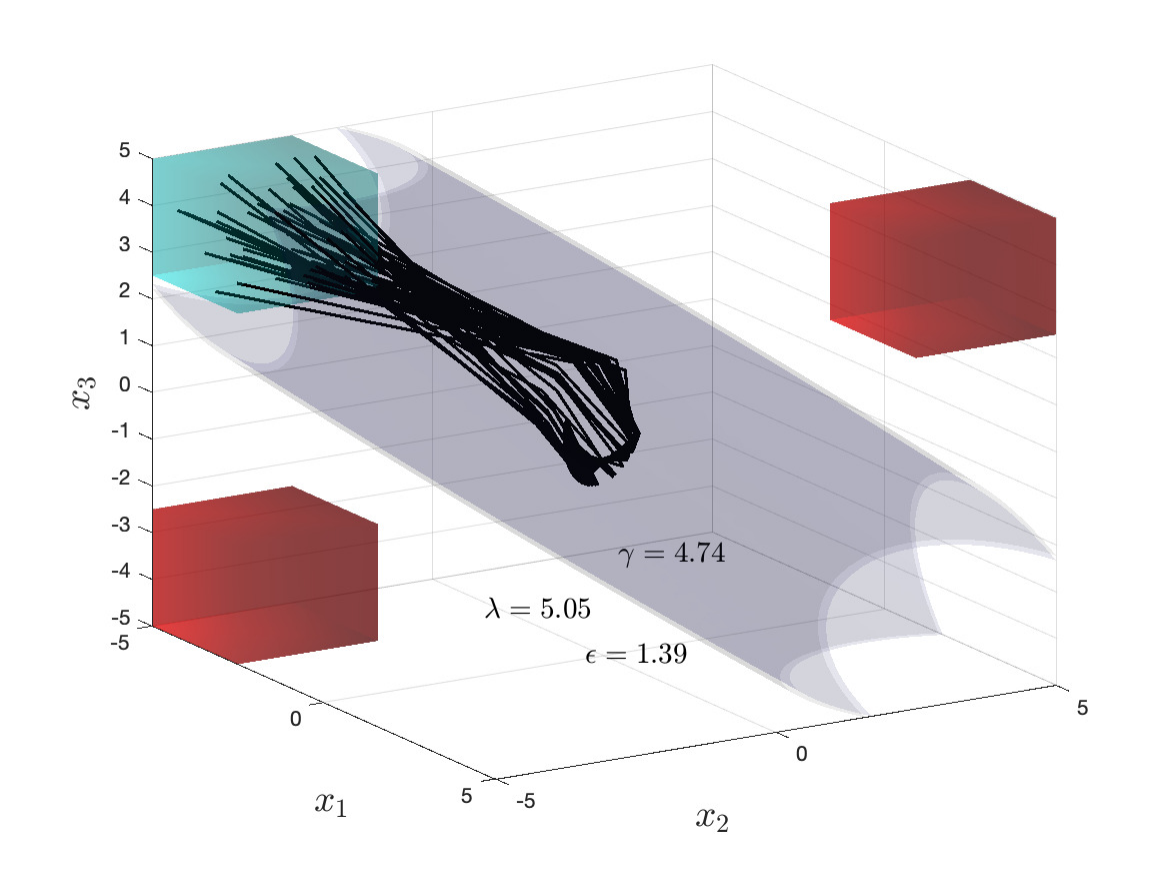}
	\caption{Closed-loop state trajectories of the nonpolynomial car ($50$ trajectories shown in black) under the designed controller and $k$-CBC with $k=2$. Emerald green and maroon boxes are initial and unsafe regions, respectively. The level sets $\gamma$ and $\lambda$ are depicted in dark and light gray, respectively. }
	\label{fig:safe-car}
\end{figure}

The regions of interest are given as: state space $X=[-5, 5]^3$, initial set $X_{\mathcal I}=[-5, -2.5]^2\times[2.5,5]$, and two unsafe sets $X_{\mathcal U}=[-5,-2.5]^3\cup[2.5,5]^3$. 
Following Algorithm~\ref{alg:nonlinear}, we collect data $\mathcal{U}_0,\mathcal{X}_0$ and $\mathcal{X}_1$ with horizon $T=50$. We design the $k$-CBC $\mathcal{B}(x) = x^\top Px$ for $k = 2$ with \[P = \begin{bmatrix}
0.4064 &&  -0.1193  &&  0.2695\\
   -0.1193   && 0.4064  &&  0.2695\\
    0.2695  &&  0.2695  &&  0.5401
\end{bmatrix}\!\!,\] level sets $\gamma=4.7361, 
 \lambda=5.0504$, $\epsilon=1.3869$, and the controller $u =\mathcal{U}_0Q{M}(x)$, where
 \[\mathcal{U}_0Q \!=\!\!  \begin{bmatrix}
-10.0 && 0.022 && -0.0002 \\
-0.01 && -10.0 && 0.0001 \\
-0.02 && -0.004 && -11.0 \\
3.29 && 3.51 && -1.50 \\
-5.02 && -0.30 && -14.8 \\
-0.17 && 0.04 && 0.01 \\
-0.004 && -0.23 && 0.008
\end{bmatrix}^\top\!\!\!\!\!\!.\]

Closed-loop state trajectories of the nonpolynomial car  under the designed data-driven controller are depicted in Fig.~\ref{fig:safe-car}. Note that without the relaxation provided by $k$-induction, no solution could be obtained for the nonpolynomial car when $k = 1$. 

{\bf Lorenz Attractor.} As the last benchmark, we consider the Lorenz attractor, a well-studied dynamical system with chaotic behavior, with the following dynamics~\cite{abate2020arch}
\begin{equation*}
    \Sigma : \begin{cases}
        x_1^+= x_1 + \tau\sigma (x_1 + x_2+u_1), \\
        x_2^+ = x_2 +\tau\rho x_1 - \tau x_2-\tau x_1x_3 + \tau u_2, \\
        x_3^+ = x_3 +\tau (x_1x_2-\beta x_3+u_3),
    \end{cases}
\end{equation*}
where $\rho=28$ is a coupling parameter for $x_1$ and $x_2$, $\sigma=10$ is a scaling parameter of $x_1$, $\beta=\frac{8}{3}$ is the damping coefficient for $x_3$, and $\tau=10^{-3}$ is the sampling time. The dynamics of the Lorenz system are in the form of~\eqref{eq:dt-NS} with
\begin{align*}
 &A=\begin{bmatrix}
    1\!+\!\tau\sigma && \tau\sigma && 0 && 0&& 0\\
    \tau\rho && 1\!-\!\tau && 0&&0&&-\tau\\
    0&&0&&1\!-\!\tau\beta&&\tau&&0
\end{bmatrix}\!\!, \quad B =\begin{bmatrix}
    \tau\sigma && 0 && 0\\
    0 && \tau && 0\\
    0&&0&&\tau
\end{bmatrix}\!\!,  \\ & M(x) = \begin{bmatrix}
    x_1 &&
    x_2&&
    x_3&&
    x_1x_2&&
    x_1x_3
\end{bmatrix}^\top\!\!\!\!.\end{align*}
While  both matrices $A$ and $B$, and the exact vector $M(x)$ are \emph{unknown}, we assume that only the \emph{maximum degree} of $2$ is provided for ${M}(x)$. Based on this, we construct an extended dictionary for ${M}(x)$ to include all possible combinations of monomials up to degree  $2$ as
\begin{align*}
	{M}(x) =[x_1~~~x_2~~~x_3~~~x_1x_2~~~ x_1x_3~~~ x_2x_3~~~x_1^2~~x_2^2~~x_3^2]^\top\!\!.
\end{align*}

\begin{figure}[t]
	\centering
	\includegraphics[width=0.47\columnwidth]{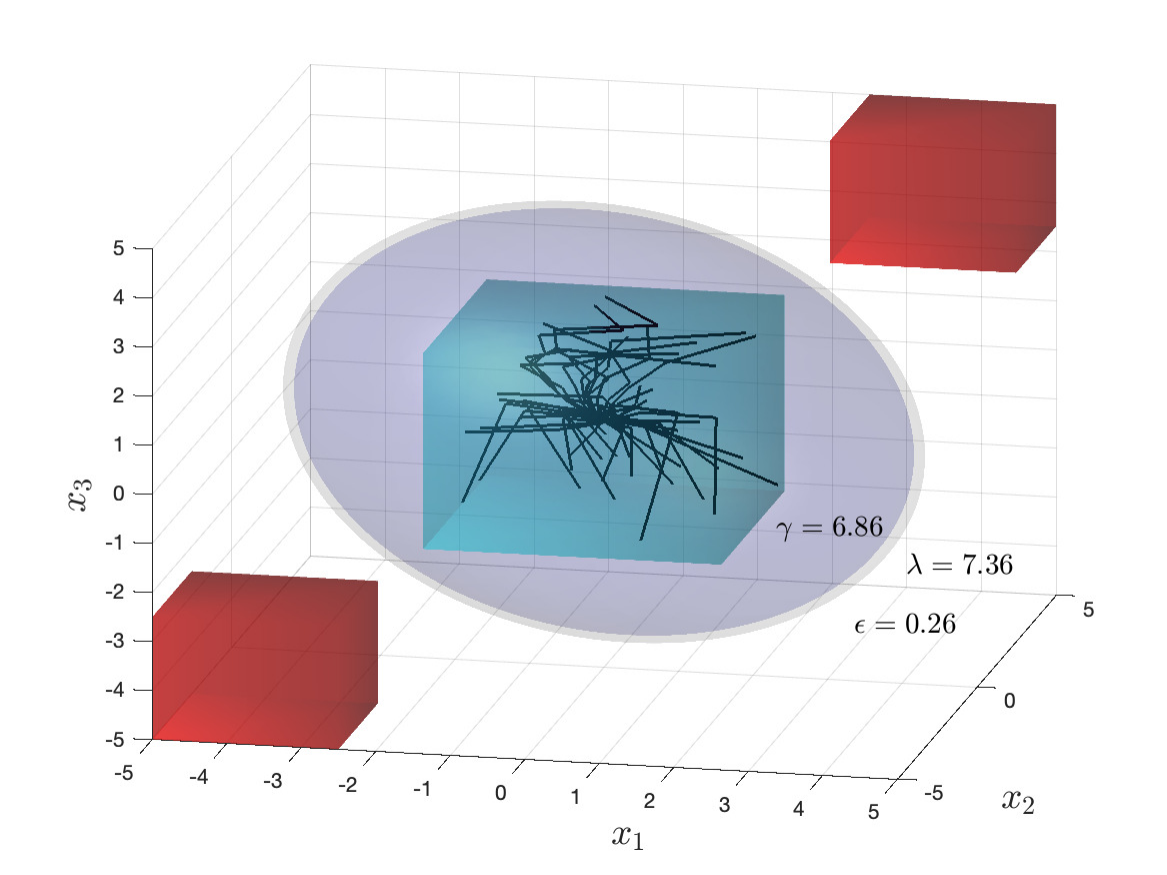}
	\caption{Closed-loop state trajectories of the Lorenz system ($50$ trajectories shown in black) under the designed controller and $k$-CBC with $k=2$. Emerald green and maroon boxes are initial and unsafe regions, respectively. The level sets $\gamma$ and $\lambda$ are depicted in dark and light gray, respectively.}
	\label{fig:lorenz}
\end{figure}

The regions of interest are given as follows: state space $X=[-5,5]^3$, initial set $X_{\mathcal I}=[-2, 2]^3$ and two unsafe sets $X_{\mathcal U}=[-5,-2.5]^3\cup[2.5,5]^3$.  We collect data $\mathcal{U}_0,\mathcal{X}_0$, $\mathcal{X}_1$ with horizon $T=50$ and design the $k$-CBC $\mathcal{B}(x) = x^\top Px$ for $k = 2$ with
\begin{align*}
	P =\begin{bmatrix}
		0.4068 &&    0.0457 &&   0.0469\\
		0.0457  &&  0.4108   && 0.0470\\
		0.0469   && 0.0470   && 0.4032\\
	\end{bmatrix}\!\!,
\end{align*} 
level sets $\gamma=6.8601, 
\lambda=7.3572$, $\epsilon=0.2593$, and the controller $u =\mathcal{U}_0Q{M}(x)$, where
\begin{align*}\label{eq:Lorenz_controller}
	\mathcal{U}_0Q = \begin{bmatrix}
		-688.1  &&  61.3   && -6.26 \\
		70.1    && -697.7  &&  16.6 \\
		11.1    && -5.48   && -658.2 \\
		-33.4   && -70.8   && -39.6 \\
		315.9   && -100.7  &&  79.5 \\
		-151.3  &&  219.2  && -74.1 \\
		58.6    &&  30.0   && -4.04 \\
		18.2    &&  75.5   &&  16.0 \\
		-113.0  && -128.1  &&  295.7
	\end{bmatrix}^\top\!\!\!\!\!\!.
	\end{align*}

Closed-loop state trajectories of the Lorenz system under the designed data-driven controller are depicted in Fig.~\ref{fig:lorenz}. 

\section{Conclusion}\label{sec:conclusion}
In this work, we developed safety controllers for discrete-time nonlinear systems with \emph{unknown} dynamics using $k$-inductive control barrier certificates ($k$-CBCs). Our relaxed method enhances the chances of constructing feasible CBCs and establishing safety guarantees for such systems. Given that the system dynamics are not explicitly known, we leveraged the concept of persistency of excitation, which ensures that input-state data from a single trajectory can effectively capture the system’s behavior, provided that the data meets a specific rank condition. We employed sum-of-squares optimization to synthesize both the $k$-CBC and its safety controller directly from such a single trajectory, ensuring the safe operation of the unknown system. We validated the effectiveness of our approach through a series of physical benchmarks with unknown dynamics. Expanding our data-driven framework to \emph{continuous-time} nonlinear systems using \emph{$t$-barrier certificates} is a subject of ongoing investigation for future work.

\bibliographystyle{alpha}
\bibliography{biblio}

\end{document}